\newenvironment{namedproof}[1][\proofname]{\par
  \normalfont \topsep6\p@\@plus6\p@\relax
  \trivlist
  \item[\hskip\labelsep
        \bfseries
    #1\@addpunct{:}]\ignorespaces
}
\newtheorem{lemma}{Lemma}
\newtheorem{theorem}{Theorem}
\newtheorem{claim}{Claim}
\newtheorem{definition}{Definition}
\newtheorem{question}{Question}
\newcommand{\eps}{\epsilon}
\newcommand{\pr}{\mathrm{Pr}}
\newcommand{\bfs}{\mathrm{BFS}}
\newcommand{\kpr}{\mathrm{KPR}}
\newcommand{\Oh}{O_{\text{exp}(h)}}
\newcommand{\angbracket}[1]{\langle #1\rangle}
\newcommand{\Cc}{\mathcal C}
\newcommand{\Ll}{\mathcal L}
\newcommand{\Ss}{\mathcal S}
\newcommand{\Tt}{\mathcal T}
\renewcommand{\leq}{\leqslant}
\renewcommand{\le}{\leqslant}
\renewcommand{\geq}{\geqslant}
\renewcommand{\ge}{\geqslant}
\Crefname{figure}{Algorithm}{Algorithms}
\title{Separator Theorem for Minor-Free Graphs in Linear Time}
\author{Édouard Bonnet\thanks{CNRS, ENS de Lyon, Université Claude Bernard Lyon 1, LIP UMR 5668, 69342 Lyon. \texttt{edouard.bonnet@ens-lyon.fr}}
\and
Tuukka Korhonen\thanks{University of Copenhagen. \texttt{tuko@di.ku.dk}}
\and
Hung Le\thanks{University of Massachusetts, Amherst. \texttt{hungle@cs.umass.edu}}
\and
Jason Li\thanks{Carnegie Mellon University. \texttt{jmli@cs.cmu.edu}}
\and
Tomáš Masařík\thanks{University of Warsaw. \texttt{masarik@mimuw.edu.pl}}
}
\date{}
\begin{document}

\maketitle

\begin{abstract}
    The planar separator theorem by Lipton and Tarjan [FOCS '77, SIAM Journal on Applied Mathematics '79] states that any planar graph with $n$ vertices has a balanced separator of size $O(\sqrt{n})$ that can be found in linear time.  This landmark result kicked off decades of research on designing linear or nearly linear-time algorithms on planar graphs. In an attempt to generalize Lipton-Tarjan's theorem to nonplanar graphs, Alon, Seymour, and Thomas [STOC '90, Journal of the AMS '90] showed that any minor-free graph admits a balanced separator of size $O(\sqrt{n})$ that can be found in $O(n^{3/2})$ time. The superlinear running time in their separator theorem is a~key bottleneck for generalizing algorithmic results from planar to minor-free graphs. Despite extensive research for more than two decades, finding a balanced separator of size $O(\sqrt{n})$ in (linear) $O(n)$ time for minor-free graphs remains a major open problem.  Known algorithms either give a separator of size much larger than $O(\sqrt{n})$ or have superlinear running time, or both. 
    
    In this paper, we answer the open problem affirmatively. Our algorithm is very simple: it runs a vertex-weighted variant of breadth-first search (BFS) a constant number of times on the input graph. Our key technical contribution is a weighting scheme on the vertices to guide the search for a balanced separator, offering a new connection between the size of a balanced separator and the existence of a clique-minor model. We believe that our weighting scheme may be of independent interest.  
\end{abstract}

\section{Introduction}

In the late 70s, Lipton and Tarjan~\cite{LT79} introduced a \emph{planar} separator theorem in \emph{linear time}: there is a linear-time algorithm that, given any $n$-vertex planar graph, returns a~balanced separator of size $O(\sqrt{n})$. (Their separator bound improved upon the earlier bound of $O(\sqrt{n}\log^{3/2} n)$ by Ungar~\cite{Ungar1951}.) In a follow-up paper~\cite{LT80}, they gave a plethora of algorithmic applications of their separator theorem, from approximating NP-hard problems to data structures, circuit lower bounds, and the maximum matching problems, to name a few.  Their results have unleashed decades of intensive research on planar graph algorithms, continuing to this day. (See the book draft by Klein and Mozes~\cite{KM24} for a sample of results in planar graphs.)  There is hardly any result in planar graphs that does not use the separator theorem or its variants~\cite{Miller86,Thorup04}.   Clearly, the linear running time is crucial for many algorithmic applications. %On a related note, a weaker planar separator theorem with size $O(\sqrt{n}\log^{3/2} n)$ was obtained by Ungar~\cite{Ungar1951} decades before, \TM{second part of the sentence does not fit as much to the story.} but the running time as well as algorithmic applications are not mentioned.

The planar separator theorem by  Lipton and Tarjan has motivated a long line of research on separator theorems for nonplanar graphs. Note that constant-degree expander graphs only have balanced separators of size $\Omega(n)$, and therefore, one has to impose additional structures besides sparsity on the input graph to guarantee the existence of a separator of sublinear size. A well-studied and perhaps most natural approach is to study graphs excluding a~fixed minor.  We say that a graph $H$ is a \EMPH{minor} of $G$ if $H$ can be obtained from $G$ by a sequence of vertex and edge deletions, and edge contractions. We say that $G$ is \EMPH{$H$-minor-free} if it does not contain $H$ as a minor. Wagner's theorem~\cite{Wagner37} characterizes planar graphs in terms of forbidden minors: a graph is planar if and only if it excludes $K_5$ and $K_{3,3}$ as minors. Therefore, planar graphs belong to a~subclass of $K_5$-minor-free graphs. This point of view naturally raises the following questions:

\begin{question}\label{question:main} Can we construct a balanced separator of size $O(\sqrt{n})$ for any given $K_h$-minor-free graph in $O(n)$ time for any fixed $h$? Can the dependence on $h$ in the separator size and the running time be made polynomial?
\end{question}

A~resolution of the first part in \Cref{question:main} casts the separator theorem for planar graphs by Lipton and Tarjan as a~special case (when $h=5$) of a~much broader landscape.  The second part seeks a stronger guarantee for a~more practical purpose: by imposing a polynomial dependence on $h$, one has to avoid sophisticated approaches based on the  Robertson--Seymour structure theorem~\cite{RS03}, such as in~\cite{KR10}, which are often considered highly impractical~\cite{LR10}. 

%While we insist on a polynomial dependency on $h$ in both the separator size and the running time in \Cref{question:main}, the problem remains open if one relaxes the polynomial dependency on $h$ to $f(h)$ for any function $h$, a.k.a., an FPT algorithm. 

Partial progress on this question was made on graphs embeddable on a surface of genus $g$; such graphs exclude $K_{O(\sqrt{g})}$ as a minor. Indeed, the paper by Lipton and Tarjan~\cite{LT80} constructed a separator of size $O(g\sqrt{n})$.  Gilbert, Hutchinson, and Tarjan~\cite{GHT84} improved the separator size to  $O(\sqrt{gn})$ and gave an algorithm that runs in $O(gn)$ time, \emph{given a genus-$g$ embedding} of the input graph. However, finding such an embedding in linear time is difficult: when $g$ is part of the input, it is NP-hard~\cite{Thomassen89}. For a small genus $g$, an algorithm with running time $2^{O(\poly(g))}n$ was found decades later ~\cite{mohar1996,KMR08}. To find an~$O(\poly(g)\sqrt{n})$-size separator, it suffices to embed a~genus-$g$ graph $G$ into a~surface with genus $\poly(g)$. There is a polynomial-time algorithm~\cite{KS15} (even when $g$ is unbounded) with such a guarantee, but it is unclear if the algorithm can be implemented in $O(\poly(g)n)$ time. In other words, even in the special case of genus-$g$ graphs, \Cref{question:main} still does not have a satisfactory answer. Note that a $K_5$-minor-free graph can have genus $g = \Omega(n)$, and therefore the class of $K_h$-minor-free graphs for a fixed $h$ is vastly broader than bounded-genus graphs.

In a breakthrough paper, Alon, Seymour, and Thomas~\cite{AST90Conf,AST90} made significant progress toward \Cref{question:main} by showing that any $K_h$-minor-free graph has a balanced separator of size $O(\poly(h)\sqrt{n})$. Their existential proof can then be turned into a polynomial-time algorithm.

\begin{theorem}[Alon, Seymour, and Thomas~\cite{AST90Conf,AST90}]\label{thm:AST} For any integer $h\geq 1$, any $K_h$-minor-free graph admits a balanced separator of size $O(h^{3/2}\sqrt{n})$ that can be found in time $O(h^{1/2}\sqrt{n} m)$.   
\end{theorem}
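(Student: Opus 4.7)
The plan is to prove the theorem by contradiction: assume the $K_h$-minor-free graph $G$ has no balanced separator of size $s := c\, h^{3/2}\sqrt{n}$, and exhibit $K_h$ as a minor of $G$, a~contradiction. The starting point is the Kostochka--Thomason density bound: every $K_h$-minor-free $n$-vertex graph has at most $O(h\sqrt{\log h}\cdot n)$ edges, hence average degree $O(h\sqrt{\log h})$. This sparsity bound underlies every counting step in the argument, and justifies working with BFS on a graph where $m$ is only $\widetilde{O}(hn)$.

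Concretely, I would first run BFS from an arbitrary vertex $r$ and obtain the distance layers $L_0, L_1, \ldots, L_d$ in time $O(n+m)$. Choose the median index $\ell$ so that both the union of layers above $L_\ell$ and the union of layers below $L_\ell$ contain at most $n/2$ vertices. If some layer in a window of $\Theta(\sqrt{n/h})$ consecutive indices around $\ell$ has size at most $s$, that single layer is already a balanced separator and we stop. Otherwise every layer in this window is \emph{thick} (more than $s$ vertices); the main technical step is to argue that such a thick band certifies a $K_h$-minor. The idea is to apply Menger's theorem between the top and bottom of the band to extract many vertex-disjoint layer-monotone paths, and then bundle these paths into $h$ branch sets of controlled length and width, using the sparsity bound to guarantee pairwise contacts. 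The main obstacle is calibrating the number and length of the paths so that $h$ mutually touching branch sets emerge under an ambient density of $O(h\sqrt{\log h})$; the numerology is exactly what produces the $h^{3/2}\sqrt{n}$ threshold, and the combinatorial bookkeeping here is delicate.

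To lift this existence proof to an algorithm running in $O(h^{1/2}\sqrt{n}\cdot m)$ time, I would replace the Menger invocation by an explicit sequence of vertex-capacitated max-flow computations between terminal sets placed in the extremal layers of the thick window. Each flow is run on the unit-capacity split graph via Hopcroft--Karp / Even--Tarjan in $O(m\sqrt{n})$ time. Either an early flow value falls below $s$ and the corresponding min-cut is returned in $O(m)$ extra time as the balanced separator, or after $O(h^{1/2})$ successful flow augmentations we have accumulated enough disjoint path structure to realize $K_h$ as a minor, contradicting $K_h$-minor-freeness. Summing the $O(h^{1/2})$ flows gives the claimed total running time $O(h^{1/2}\sqrt{n}\cdot m)$.
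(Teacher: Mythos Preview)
This theorem is not proved in the paper; it is quoted as background from Alon, Seymour, and Thomas~\cite{AST90Conf,AST90}, and the paper's own contribution (\Cref{thm:main}) is precisely to improve its running time to linear. There is thus no in-paper proof to compare against, but your sketch neither reconstructs the AST argument nor stands on its own.

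The claim that a thin BFS layer inside a window around the median ``is already a balanced separator'' is false. Take a path $v_0\cdots v_{\sqrt n}$, a path $u_1\cdots u_{n-\sqrt n-1}$, and make every $u_i$ adjacent to $v_{\sqrt n}$; the resulting fan-with-tail is outerplanar, hence $K_5$-minor-free. From $v_0$ the median layer is $L_{\sqrt n+1}=\{u_1,\ldots\}$, and every nearby layer $L_j$ with $j\le\sqrt n$ has size~$1$; yet removing any such $L_j$ leaves the connected $u$-path of size $n-\sqrt n-1>2n/3$. In Lipton--Tarjan the thin-layer step is only a diameter reduction, and balance comes from a second, planarity-specific fundamental-cycle argument that has no analogue in your plan.

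More fundamentally, ``thick band $\Rightarrow$ $K_h$-minor'' is unsupported. Menger yields many vertex-disjoint monotone paths, but disjoint paths do not force pairwise adjacency of branch sets: the $\sqrt n\times\sqrt n$ grid has $\sqrt n$ disjoint top-to-bottom paths and is $K_5$-minor-free. Your appeal to Kostochka--Thomason points the wrong way---it \emph{upper}-bounds edge counts and so cannot certify that edges exist between chosen bundles. The actual AST proof uses a completely different mechanism: it iteratively grows at most $h-1$ short connected pieces, each obtained by a shortest-path computation inside the current large component and chosen so as to touch every earlier piece; pairwise adjacency is engineered step by step, not inferred from density. Both the $h^{3/2}\sqrt n$ size and the $O(\sqrt{hn}\,m)$ time fall out of that iterative construction, not from BFS layers or max-flow.
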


In concluding their paper, Alon, Seymour, and Thomas~\cite{AST90} wrote: ``\emph{It may be that by using more sophisticated, dynamic data structures, the algorithm can be implemented more efficiently, but at the moment we do not see how to do so.}" Indeed, subsequent efforts to obtain linear (or almost linear) time algorithms are by designing completely different algorithms. Reed and Wood~\cite{RW09} gave an algorithm that finds a balanced separator of size $2^{O(h^2)}n^{2/3}$ in time $2^{O(h^2)}n$; these bounds are part of a more general tradeoff between separator size and running time. Ignoring the dependence on $h$, their separator size is $O(n^{2/3})$, which is much larger than the $O(\sqrt{n})$ bound in \Cref{thm:AST}. Kawarabayashi and Reed~\cite{KR10} used Robertson--Seymour's structure theorem of minor-free graphs, giving a~quadratic algorithm for finding a separator of size $O(h\sqrt{n})$, and sketching another algorithm in time $O(g(h)n^{1+\eps})$ for any fixed constant $\eps \in (0,1)$. One of the authors~\cite{Kawara11} later clarified that the precise separator bound is $O(h\sqrt{n} + f(h))$. Here $f(h)$ and $g(h)$ are functions of the Robertson–Seymour type.  Their algorithm is very complicated, along the lines of Robertson--Seymour decomposition.  The full version has not yet appeared after over fifteen years. 

Plotkin, Rao, and Smith~\cite{PRS94} introduced the notion of \emph{shallow minor},\footnote{Interestingly, it turns out that shallow minors play a crucial role in recent theory of sparse graphs~\cite{NO12}.} and derived a~polynomial-time algorithm for finding a balanced separator for $K_h$-minor-free graphs of size
\begin{equation}\label{eq:shallow}
   O(n/\ell + h^2\ell \log n) 
\end{equation}
where $\ell \in [1,n]$ is a parameter. For example, by setting $\ell = \sqrt{n}/(h\sqrt{\log n})$, we obtain a separator of size $O(h\sqrt{n\log n})$. While the running time of their algorithm~\cite{PRS94} is not better than \Cref{thm:AST}, it is amenable to a fast implementation. Specifically, Wulff-Nilsen~\cite{WulffNilsen11} designed a faster implementation of the algorithm by Plotkin, Rao, and Smith, obtaining two interesting trade-offs: a balanced separator of size $O(h\sqrt{n\log n})$ in time $O(\poly(h)n^{5/4+\eps})$ or of size $O(\poly(h)n^{4/5+\eps})$ in linear time $O(\poly(h)n)$. In a follow-up paper, Wulff-Nilsen~\cite{wulffnilsen14} improved the running time further when $\ell \approx n^{\eps}$; however, the improvement is not significant when $\ell \approx \sqrt{n}$.   Biswal, Lee, and Rao~\cite{BLR10} gave an alternative proof of \Cref{thm:AST} based on spectral graph theory, but their proof does not yield a better running time.  In summary, despite intensive research for more than two decades, known algorithms either give a separator of size much larger than $O(\poly(h)\sqrt{n})$ or the running time is superlinear, or both. Therefore, \Cref{question:main} remained wide open. See \Cref{table:separator} for a summary. 

\begin{table}[!ht]
\small\centering
\renewcommand{\arraystretch}{1.5}
\begin{tabular}{r|r|r|l}
Separator size & Running time & References & Notes
\\\hline
$O(h^{3/2}\sqrt{n})$ & $O(h^{1/2}\sqrt{n}m)$ & \cite{AST90} & \\
$2^{O(h^2)}n^{2/3}$ & $2^{O(h^2)}n$ & \cite{RW09} &  $2^{O(h^2)}n^{(2-\eps)/3}$-size in $2^{O(h^2)}n^{1+\eps}$ time  $\forall \eps\in [0,1/2]$ \\
$O(h\sqrt{n}+f(h))$ & $O(g(h)\,n^{1+\eps})$ & \cite{KR10} & full proof not available yet \\
$O(h\sqrt{n\log n})$ & $O(h\sqrt{n\log n}m)$ & \cite{PRS94} & setting $\ell = \sqrt{n}/(h\sqrt{\log n})$ in \Cref{eq:shallow} \\
$O(\poly(h)n^{4/5+\eps})$ & $O(\poly(h)\,n)$ & \cite{WulffNilsen11} & fast implementation of~\cite{PRS94} \\
$O(h\sqrt{n\log n})$ & $O(\poly(h)\,n^{5/4+\eps})$ & \cite{WulffNilsen11} & fast implementation of~\cite{PRS94}\\
$O(\poly(h)\sqrt{n}\log^2(n))$ & $O(\poly(h)\,n\polylog(n))$ & \cite{Racke14,Peng16} & balanced cut, cannot output a minor model\\
$O(\poly(h)\sqrt{n})$ & $O(\poly(h)\,n)$ & This paper& \Cref{thm:main}\\
\hdashline
$O(\sqrt{n})$ & $O(n)$& \cite{LT79}& planar graphs\\
$O(\sqrt{gn})$ & $2^{O(\poly(g))}n$& \cite{GHT84,KMR08}& genus-$g$ graphs\\
$O(\poly(g)\sqrt{n})$ & $O(\poly(g)\,n)$ & This paper& genus-$g$ graphs, direct corollary of \Cref{thm:main} \\
\end{tabular}
\caption{The top part of the table contains separator theorems for $K_h$-minor-free graphs. Functions $f(h)$ and $g(h)$ in line 3 are of the Robertson--Seymour type: a~tower of several exponentials.}
\label{table:separator}
\end{table}

We note that a balanced separator could be obtained by approximating sparsest cuts, or more precisely, balanced cuts, in graphs. The cut-matching game~\cite{KRV09} and its non-stop version~\cite{Racke14}, together with recent developments on approximate maxflow~\cite{Sherman13,Peng16}, give a~$\polylog(n)$-approximation algorithm in $O(m\polylog(n))$ time. For minor-free graphs, one may assume $m = O(\poly(h)n)$~\cite{Kostochka82}, and therefore, the running time becomes  $O(\poly(h)n\polylog(n))$. However, these techniques inherently have multiple log factors overhead in both the separator size and the running time--- in particular, the $\polylog(n)$ in the running time is at least $\log^{40}n$---and hence do not seem to be the right approach for settling \Cref{question:main} completely. Furthermore, they often require sophisticated subroutines, e.g., cut-matching game or graph sparsifiers, and cannot construct a $K_h$-minor model to certify the failure of obtaining a small separator. The failure to output a  $K_h$-minor model means that these algorithms often offer no insight into why a small separator should exist in $K_h$-minor-free graphs. Here, a \EMPH{$K_h$-minor model} of a graph $G$ is a collection of $h$ vertex-disjoint connected subgraphs $\{C_1,C_2,\ldots, C_h\}$ of $G$ such that for every two subgraphs $C_i,C_j$ where $i\not=j$, there exists an edge $uv \in E(G)$ where $u\in V(C_i)$ and $v\in V(C_j)$.

\subsection{Our Contribution}\label{subsec:contribution}

In this paper, we answer \Cref{question:main} completely: for a given $K_h$-minor-free graph with $n$ vertices, we construct a~balanced separator of size $O(\sqrt{n})$  in $O(n)$ time for any fixed $h$, and furthermore, the dependence on $h$ in both the separator size and the running time is polynomial.  Specifically, in deterministic $O(\poly(h)n)$ time, our algorithm either constructs a separator of size $O(\poly(h)\sqrt{n})$ or outputs $\bot$ to indicate that the input graph $G$ contains a $K_h$ as a minor. Whenever the algorithm fails to construct a balanced separator, if we are required to output a $K_h$-minor model, we can do so in \emph{randomized linear time} $O(\poly(h)n)$ with success probability at least $1/2$. As an added bonus, our algorithm is extremely simple: we find a balanced separator by running breadth-first search (BFS) on a~suitable graph $O(\poly(h))$ times! See \Cref{subsec:idea} for more details. Our results are summarized in the following theorem.

\begin{restatable}{theorem}{MainTheorem}\label{thm:main} Let $G$ be any given graph with $n$ vertices, and $h > 0$ be a parameter. There is an algorithm that runs in deterministic $O(\poly(h)n)$ time and outputs either:
\begin{itemize}
    \item a balanced separator of size $O(\poly(h)\sqrt{n})$ or
    \item  $\bot$ if $G$ contains a $K_h$ as a minor. 
\end{itemize}
 Furthermore, in the latter case, we can output a $K_h$-minor model of $G$ with probability at least $1/2$ in an additional randomized $O(\poly(h)n)$ time.
\end{restatable}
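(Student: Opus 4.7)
The plan is to run $O(\poly(h))$ passes of a vertex-weighted BFS on the input graph and then extract either a balanced separator or a certificate of a $K_h$-minor from the collection of BFS trees. As a preprocessing step, by Kostochka's bound any $K_h$-minor-free graph has $O(h\sqrt{\log h}\,n)$ edges, so if $|E(G)|$ exceeds this threshold I immediately return $\bot$. This sparsification guarantees that each BFS costs $O(\poly(h)\,n)$ time, so that $O(\poly(h))$ BFS passes still fit in linear total time.

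The first pass is an ordinary BFS from an arbitrary root, partitioning $V(G)$ into layers $L_0,\dots,L_d$. A Lipton--Tarjan-style test checks whether one or two consecutive layers of total size $O(\poly(h)\sqrt{n})$ already split the remaining vertices into two parts of size at most $\tfrac{2}{3}n$; if so, return them. Otherwise a counting argument produces a ``heavy band'' of $\Omega(\sqrt{n}/\poly(h))$ consecutive layers, each containing $\Omega(\sqrt{n}/\poly(h))$ vertices. This triggers the main innovation: assign to each vertex $v$ a weight $w(v)$ that quantifies how much of the heavy band can be ``cut off'' by $v$, and then run a weighted BFS whose level sets become the candidate separators in the next round. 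The weighting must be chosen so that (i) every path crossing the heavy band accumulates large total weight, forcing some cheap level to exist unless $G$ is genuinely minor-rich, and (ii) a cheap weighted level is also small in cardinality, giving the desired separator.

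If after $O(\poly(h))$ rounds of refining the weights no level qualifies as a balanced separator, I would read off a $K_h$-minor from the obstruction: the failure of each round concentrates the difficulty into $h$ disjoint dense regions within the heavy band that are pairwise connected through short BFS paths, and an Alon--Seymour--Thomas-style exchange argument, applied now to a $\poly(h)$-sized structured witness rather than to all of $G$, assembles the model combinatorially; this justifies the deterministic $\bot$ output. For the randomized construction of an explicit $K_h$-minor model, I would sample random root-to-leaf BFS paths inside each dense region; density ensures that $O(\poly(h))$ samples suffice per branch set, and standard concentration yields success probability at least $1/2$. The principal obstacle I anticipate is calibrating the weighting scheme so that properties (i) and (ii) persist across all rounds: each round must either output a separator or make measurable progress toward a minor, and binding these two requirements together through a potential function that decreases by an $h$-dependent factor per round is the delicate technical part of the argument.
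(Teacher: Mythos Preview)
Your proposal has the right architecture---iterate weighted BFS, return a separator if some round succeeds, else extract a $K_h$-minor from the accumulated trees via random rooted paths---but it leaves unspecified exactly the two ingredients that make the scheme work, and you yourself flag the first of them as ``the principal obstacle.'' The weighting is not ``how much of the heavy band can be cut off by $v$''; the paper sets $w_{t+1}(v) \approx w_t(v)\bigl(1 + |T_t(v)|\,k^3/\sqrt{n}\bigr)$, where $T_t(v)$ is the subtree of the round-$t$ BFS tree rooted at $v$ and $k=\Theta(h^2)$. This specific formula keeps the total weight $\sum_v w_t(v)$ bounded by $\poly(h)\,n$ (so each round stays linear-time and the vertex-weighted KPR separator computed in each round---not a Lipton--Tarjan layer test---has size $\poly(h)\sqrt{n}$) while simultaneously making $w_t(v)$ an upper bound, up to scaling, on the probability that a random rooted path in any earlier tree passes through $v$. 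Without this or an equivalent update rule you control neither the running time nor the separator size across rounds.

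The deeper gap is the mechanism turning ``no separator after $k$ rounds'' into a minor. Your sketch appeals to an ``Alon--Seymour--Thomas-style exchange argument'' on ``$h$ disjoint dense regions,'' but nothing in the iteration produces such regions, and AST's argument is precisely the $O(n^{3/2})$-time bottleneck the theorem is meant to bypass. The paper's replacement is the \emph{stochastic connector}: the weight invariants guarantee that for any two of the $k$ BFS trees, independently sampled random rooted paths collide with probability at most $1/(5k^2)$. Assigning each edge of $K_h$ its own tree and routing it by the tree path between two uniformly random vertices, a union bound over the $\binom{h}{2}$ edge pairs then yields an almost-embedding of a subdivided $K_h$ with constant probability, which is converted to a genuine minor model. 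This collision-probability analysis---tying subtree sizes, vertex weights, and path-intersection events together through explicitly maintained invariants---is the technical core of the proof and is entirely absent from your plan.
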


The precise separator size is $O(h^{13}\sqrt{n})$ and the precise running time is $O(h^{13}n)$. In this paper, we do not attempt to minimize the exponent of $h$ to keep a simple presentation.  When we are not required to output a $K_h$-minor model, since a $K_h$-minor-free graph has $O(h\sqrt{\log h}n)$ edges~\cite{Kostochka82}, we simply output $\bot$ when $m > c\cdot h\sqrt{\log h}n$, where $m \coloneqq |E(G)|$, for a sufficiently large constant $c$ to indicate that $G$ is not $K_h$-minor-free. If we are required to output a~$K_h$-minor model in this case, then we can only keep the first $c\cdot h\sqrt{\log h}n$ (arbitrary) edges from $G$, and find a $K_h$-minor model of the resulting graph. However, existing algorithms~\cite{RW09,DHJRW13}  have running time $2^{\Omega(h)}n$:  \cite{RW09} requires at least $2^{h-3}n$ edges and takes time $2^{O(h)}n$, while \cite{DHJRW13} requires only $\Theta(h\sqrt{ \log h})n$ edges but takes time $\poly(h)n+2^{\Tilde{O}(h^2)}$ due to a brute-force step on a $\Theta(h^2 \log{h})$-vertex graph. There are two ways we can resolve this issue.  We could run our algorithm without the assumption that $m  = O(\poly(h)n)$, and the final running time would be $O(\poly(h)m)$. Alternatively, and more efficiently, we provide an algorithm (\Cref{lem:minors-in-dense}) that in deterministic $O(\poly(h)n)$ time, produces a $K_h$-minor model when $m \geq 100 h^2n$.  
As a result, we can assume that $m = O(h^2)n$.  Note that we do not have a dependence on $m$ in the running time since we only read the first $100 h^2n$ edges of $G$, and ignore other edges. %That is, our algorithm runs in sublinear time when $m \gg \poly(h)n$.

%The precise separator size is $O(h^{13}\sqrt{n})$ and the precise running time is $O(h^{13}n)$. %\blue{É: this should now be $O(h^6 n + h^5 \cdot h^2 n + h^{13} n)=O(h^{13} n)$}.  In this paper, we do not attempt to minimize the exponent of $h$ to keep a simple presentation.  While we state \Cref{thm:main} with a dependency on $m$, we can assume that $m  = O(\poly(h)n)$ when we are not required to output a $K_h$-minor model, since a $K_h$-minor-free graph has $O(h\sqrt{\log h}n)$ edges. Therefore, if the input graph has at least $c\cdot h\sqrt{\log h}n$ edges for some constant $c\gg 1$, we simply output $\bot$, indicating that $G$ is not $K_h$-minor-free graph. If we are required to output a $K_h$-minor model in this case, then existing algorithms~\cite{RW09,DHJRW13} \blue{É: \cite{RW09} requires at least $2^{h-3}n$ edges and takes time $2^{O(h)}n$, \cite{DHJRW13} requires only $\Omega(h\sqrt{ \log h})n$ edges but takes time $2^{\Tilde{O}(h^2)}n$, afaict (there is a brute-force on a $\Theta(h^2 \log{h})$-vertex graph). Maybe we can observe that we get alternatively to $\poly(h) m$ an algorithm in time $2^{O(h)}n$, and ask for a $\poly(h) n$ $K_h$-minor detection in graphs dense enough to have $K_h$ minors.} have running time $2^{O(h)}n$ \blue{Jason: why is it not $2^{O(h)}m$? É: It's enough to read out the first $2^{h-3} n$ edges in \cite{RW09}}, an exponential dependency on $h$. Here we simply run our algorithm (described in \Cref{subsec:idea}), without \blue{É: do you really mean 'without'?} the assumption that $m  = O(\poly(h)n)$. 

Our \Cref{thm:main} applied to graphs of genus-$g$ also gives a new result: we can find a separator of size $O(\poly(g)\sqrt{n})$ in $O(\poly(g)n)$ time, avoiding computing the surface embedding of the input graph, which is a difficult task, as remarked above. %\TM{Maybe we could formulate it as a corollary and it could be nice to add it to the table as well. What do you think?}

\subsection{Technical Ideas}\label{subsec:idea}

For simplicity, we assume that the input graph is $K_h$-minor-free, and our goal is to construct a separator of size $O(\poly(h)\sqrt{n})$ in deterministic linear time $O(\poly(h)n)$. We say that a subset of vertices $S\subseteq V$ is an \EMPH{$\alpha$-balanced separator} of $G = (V,E)$ for some $\alpha \in (0,1)$ if every connected component of $G\setminus S$ has size at most $\alpha\cdot|V|$. We simply say that $S$ is a \EMPH{balanced separator} if it is $2/3$-balanced.   Our algorithm is given in \Cref{alg:sep-simplified}; it outputs a~$(1-\frac{1}{200h^2})$-balanced separator of size $O(\poly(h)\sqrt{n})$. To get a $2/3$-balanced separator, we simply repeat the algorithm $O(h^2)$ times and take the union of all the separators from all the runs. The size of the separator increases by a factor of $O(h^2)$ and hence remains $O(\poly(h)\sqrt{n})$. The highlighted lines in \Cref{alg:sep-simplified} are where we apply the vertex-weighted variant of BFS to $G$. These weights are designed by our algorithm, starting from uniform weights for all vertices. We will clarify the role of the weight function shortly.

To formally define our weighted variant of BFS, we need to introduce some notation. Let $w: V \rightarrow \mathbb{Z}_+$ be a positive integer weight function on the vertices of $G$; we write $\angbracket{G,w}$ to denote $G$ with weight function $w$ attached to it. We define the \EMPH{length} of a path in $\angbracket{G,w}$  to be the total weight of the vertices on the path. (Note that if the path contains a single vertex $v$, its length is $w(v)$ instead of $0$.) The (vertex-weighted) \EMPH{distance} between two vertices $u$ and $v$, denoted by $d_{\angbracket{G,w}}(u,v)$, is the length of the shortest path between $u$ and $v$. Edges of graphs in our paper are not weighted, so the length and distance terminologies are meant to be vertex-weighted only.

The procedure \EMPH{$\bfs(\angbracket{G,w},v,r)$} computes a (vertex-weighted) shortest-path tree, say $T$, rooted at $v$ in $\angbracket{G,w}$ truncated at radius $r$. That is,  $\bfs(\angbracket{G,w},v,r)$ only contains vertices at distance at most $r$ from $v$.   Let $W = \sum_{v}w(v)$ be the total vertex weight. By standard reduction, one could reduce computing the vertex-weighted shortest-path tree to computing a BFS tree in a (directed) graph with $\{0,1\}$-edge weights in time $O(m+W)$; see \Cref{lm:node-BFS}. In our algorithm, we will guarantee that $W = \poly(h)n$ and hence the running time of $\bfs(\angbracket{G,w},v,r)$ is $O(m + \poly(h)n)$.

Another key subroutine of our algorithm is the vertex-weighted variant of the \EMPH{KPR decomposition}, named after Klein, Plotkin, and Rao~\cite{KPR93}. The KPR decomposition, given an unweighted\footnote{KPR could work for edge-weighted graphs as well; in our paper, the intuition came from the unweighted version.} $K_h$-minor-free graph $G$ edges and a parameter $\Delta > 0$, decomposes $G$ into connected components of \EMPH{weak (hop) diameter} $O(h^2\Delta)$ by removing $O(h\cdot m/\Delta) =  (h^2 \sqrt{\log h} n/\Delta)$ edges. When $\Delta \approx \sqrt{n}$, the number of edges removed is $O(\poly(h)\sqrt{n})$, coinciding with the separator bound that we are aiming for.  The decomposition is obtained by \EMPH{computing BFS trees in $h$ rounds}. Here in our work, we need ($i$) a vertex-weighted version of KPR and ($ii$) to output a $K_{h}$-minor model in linear time whenever the algorithm fails to output the largest component of small (weak) hop diameter. Therefore, our guarantees, summarized in \Cref{lm:KPR} below, are slightly different from the original KPR~\cite{KPR93}. The proof will be given in \Cref{app:KPR-proof}. 

\begin{figure*}[ht!]
\centering%\small
\begin{algorithm}
\textul{$\textsc{FindSeparator}(G=(V,E))$:}$\qquad$ \Comment{Assume that $G$ is $K_h$-minor-free} \+
\\   $w_1(v)\leftarrow 40$ for every $v\in V$
\\   $S\leftarrow \emptyset\qquad$  \Comment{The separator}  
\\   for $t\leftarrow 1$ to $20h^2$ \+
\\      \hl{$(C^{*}_t, S_t)\leftarrow \kpr(\angbracket{G,w_t},\lfloor\sqrt{n}/6h^2\rfloor,h)$} $\qquad$ \Comment{KPR with $\Delta = \lfloor\sqrt{n}/6h^2\rfloor$, applying BFS $h$ times}
\\      $S\leftarrow S\cup S_t$
\\      if $|C^*_t|\leq (1-\frac{1}{200h^2})n$ $\qquad$ \Comment{The current separator is $(1-\frac{1}{200h^2})$-balanced}\+
\\          return $S$    \-  
%\\      $C_t^*\leftarrow\argmax_{C\in \mathcal{C}} |C|\qquad$  \Comment{Component of maximum size}
\\      $c_t\leftarrow$ an arbitrary vertex in $C^*_t$
\\      \hl{$T_{t}\leftarrow\bfs(\angbracket{G,w_t},c_t,\sqrt{n})$} $\qquad$ \Comment{BFS truncated at radius $\sqrt{n}$}
\\      for every $v\in V(T_{t})$ \+
\\          $w_{t+1}(v)\leftarrow w_t(v) + \ceil{\frac{|T_{t}(v)|20^3h^6}{\sqrt{n}}\cdot w_t(v)}  \qquad$ \Comment{Reweighting vertices for next iteration} \-\-
\\   return $S$
\end{algorithm}
\caption{An algorithm for finding a $(1-\frac{1}{200h^2})$-balanced separator with size $O(\poly(h)\sqrt{n})$ of a $K_h$-minor-free graph $G$. $T_{i}(v)$ denotes the vertex set of the subtree of $T_{i}$ rooted at $v$.}
\label{alg:sep-simplified}
\end{figure*}

\begin{restatable}[Vertex-Weighted KPR]{lemma}{VertexWeightKPR}\label{lm:KPR} 
Given a graph $\angbracket{G = (V,E),w}$ with integer weights on vertices, $m$ edges and integer parameters $\Delta > 0, h > 0$. Let $W = \sum_{v\in V}w(v)$. The procedure $\kpr(\angbracket{G,w}, \Delta,h)$ runs in  time $O(h\cdot (m+W))$ and returns either:
\begin{itemize}
    \item a pair $(C^*, S)$ where $S$ is a subset of vertices of size $h\cdot W/\Delta$, and $C^*$ is the connected component of \emph{maximum size} of  $G - S$, which is guaranteed to have (vertex-weighted) weak diameter at most $6\cdot h^2\Delta$, or
    \item  a $K_h$-minor model.
\end{itemize}
\end{restatable}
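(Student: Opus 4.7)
The plan is to implement the classical Klein--Plotkin--Rao ball-growing decomposition in the vertex-weighted setting. I maintain a growing vertex set $S$ (initially empty), together with the connected components of $G - S$. In iteration $i$ (for $i = 1, \ldots, h$), I take each current component $P$, pick an arbitrary root $r_P \in P$, and compute $T_P \coloneqq \bfs(\angbracket{P,w}, r_P, \infty)$. I then partition $P$ into $\Delta$ modular layers $L_0^P, \ldots, L_{\Delta - 1}^P$ based on the residue of $d_{\angbracket{P,w}}(r_P, \cdot)$ modulo $\Delta$; since $\sum_j w(L_j^P) = w(P)$, some layer has weight at most $w(P)/\Delta$, which I add to $S$. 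After the $h$ iterations, I set $C^*$ to be the largest connected component of $G - S$ and return either $(C^*, S)$ or a $K_h$-minor model, as explained below.

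The weight accounting is a straightforward averaging. In iteration $i$, the total weight of vertices added to $S$ over all components $P$ is at most $\sum_P w(P)/\Delta \leq W/\Delta$, and summing over $h$ iterations gives $|S| \leq w(S) \leq hW/\Delta$, using that the weights are positive integers.

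The heart of the proof is the bound on the weak diameter of $C^*$, which I plan to handle via a dichotomy. Either (i) at the end of the $h$ iterations, $C^*$ has weak diameter at most $6h^2 \Delta$, in which case the algorithm returns $(C^*, S)$, or (ii) during some iteration we detect structural evidence that $G$ contains a $K_h$-minor, which we output instead. To carry out (ii), I plan the following inductive minor construction. Keep track after iteration $i$ of a ``touching chain'' of $i$ pairwise-touching pairwise-disjoint connected branch sets $B_1, \ldots, B_i$, each one a subtree of the BFS tree from that round, together with a ``live'' vertex $v_i \in C^*$ that lies at BFS depth at least $6(h - i)\Delta$ in every subsequent BFS. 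In iteration $i+1$, the BFS from $r_{i+1}$ inside the current component still reaches $v_i$ at depth at least $6\Delta$, so the root-to-$v_i$ path can be sliced into a connected branch set $B_{i+1}$ that touches every previous $B_j$ (by using the connectivity of the component containing all $B_j$) and leaves a next live vertex $v_{i+1}$ deeper still. Launched from a pair of vertices of $C^*$ with $d_G$-distance exceeding $6h^2\Delta$, this construction terminates at round $h$ with a $K_h$-minor model. In the complementary case, one shows that every BFS from $r_i$ reaches every vertex of $C^*$ within depth $O(h\Delta)$, and careful bookkeeping across the $h$ rounds yields the claimed weak-diameter bound $6h^2\Delta$.

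For the running time, each round runs one vertex-weighted BFS per piece, and these BFSes together examine each edge and each unit of vertex weight a constant number of times, giving $O(m + W)$ per round by \Cref{lm:node-BFS}. Computing the modular partition, identifying the lightest residue class, and maintaining the component structure fit into the same $O(m + W)$ per round. Summing over the $h$ rounds gives $O(h(m + W))$, as claimed. The main obstacle I anticipate is exactly the inductive $K_h$-minor construction in step (ii): ensuring disjointness, connectedness, and the pairwise-touching property of the branch sets across $h$ rounds while respecting the BFS level structure imposed by modular slicing, and matching the ``depth budget'' so the constants work out to give the stated $6h^2 \Delta$ bound rather than something exponential in $h$.
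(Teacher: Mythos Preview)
Your high-level plan matches the paper's: iterated vertex-weighted BFS with layer slicing, then either bounded weak diameter or a clique minor. But two pieces need fixing.

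First, partitioning vertices by the residue of $d_{\langle P,w\rangle}(r_P,\cdot)\bmod\Delta$ and deleting the lightest class does not give the ``thin-band'' property you implicitly rely on. In the unweighted setting, removing a residue class confines each surviving component to fewer than $\Delta$ consecutive BFS levels, and this is what drives both the weak-diameter bound and the minor construction. With vertex weights, adjacent $u,v$ can satisfy $|d(r_P,u)-d(r_P,v)|$ as large as $\max(w(u),w(v))$, so a single heavy vertex bridges across the removed class and the surviving piece can span arbitrarily many levels. The paper handles this by letting a vertex of weight $w(v)$ occupy $w(v)$ consecutive levels and then removing the cheapest of the $\Delta$ periodic level-sets $L_{1+i},L_{1+i+\Delta},\ldots$; any vertex of weight exceeding $\Delta$ lies in every option and is forced into $S_1$, restoring the thin-band property (P4). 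The averaging bound $|S_i|\le W/\Delta$ still holds because the total number of level-occurrences equals~$W$.

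Second, the ``touching chain'' minor construction is indeed where the content lies, and the paper does not go that route. Once the final $C^*$ has weak diameter exceeding $6h^2\Delta$, the paper greedily picks $h$ vertices $a_1,\dots,a_h\in C^*$ pairwise at weak distance larger than $2(h+1)\Delta$ along a long path in $C^*$, and builds a $K_{h,h}$-minor rather than $K_h$ directly: one side consists of branch sets $A_i$ grown around the $a_i$ (by at most $\Delta$ per round), and the other side consists of branch sets $B_j$, each a subtree of the $j$-th BFS tree $T_j$ rooted at $r_j$ and reaching down to depth roughly $(h+1)\Delta$ toward every $a_i$. Disjointness of the $B_j$'s and the $A_i$'s comes straight from the thin-band property (P4) together with a depth lower bound (P5) on the location of $C^*$ in each $T_j$. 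This sidesteps the inductive touching-chain bookkeeping entirely and makes the constant $6h^2\Delta$ fall out from a simple distance count.
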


As we will guarantee that $W = \poly(h)n$, the total running time of $\kpr(\angbracket{G,w}, \Delta)$  is $O(\poly(h)m)$. We will set $\Delta = \sqrt{n}/h^2$ (in \Cref{alg:sep-simplified}) and therefore, the number of vertices in the separator returned by KPR is $O(\poly(h)\sqrt{n})$. Similar to the original KPR~\cite{KPR93},
our vertex-weighted version of KPR decomposition can be constructed by applying (vertex-weighted) BFS $h$ times. Thus, in total, our algorithm calls vertex-weighted BFS $O(h^3)$   times. Therefore, it is both simple, efficient, and completely different from all other existing algorithms!

Our key technical contribution is a weighting scheme on the vertices, which offers a new insight into the dynamics between the size of the balanced separator and the existence of a $K_h$-minor model.  The scheme serves two different purposes: guiding the search for the small balanced separator, and the search for the minor model when such a separator does not exist. (The second purpose does not show up in \Cref{alg:sep-simplified}; instead, it will be part of the analysis and in the algorithm for constructing a minor model if required.)

We first describe how the weighting scheme guides the search for a balanced separator. Initially, every vertex $v$ has the same role and hence is given the same weight $w_1(v) = 40$. When we apply KPR to find a separator $S_t$ (at an iteration $t$), the largest component $C^*_t$ of $G-S_{t}$ is only guaranteed to have a small diameter (of $O(\sqrt{n})$); it could contain almost all the vertices.  If in the next iteration, we simply apply KPR again on the large component, we might not get anything, i.e., the returned separator is empty. By increasing the current weight of a vertex  $v$ to:% \blue{Jason: need to take ceiling below to ensure integer. Total weight increase is $\le n$ from taking ceiling so it's fine, but need to update proof, especially invariant 3 to $\le(40+(k^3+1)(t-1))n$.}
\begin{equation}\label{eq:reweight}
  w_{t+1}(v)\leftarrow w_t(v) + \ceil*{\frac{|T_{t}(v)|20^3h^6}{\sqrt{n}}\cdot w_t(v)} \approx w_t(v)\left(1 + \frac{|T_{t}(v)|20^3h^6}{\sqrt{n}}\right),
\end{equation}
where $T_t(v)$ is the vertex set of the \EMPH{subtree rooted at $v$} and $|T_t(v)|$ is its (unweighted) \EMPH{size}, the (vertex-weighted) diameter of the component in the next iteration increases. The ceiling in \Cref{eq:reweight} is to guarantee that $w_{t+1}(v)$ is an integer. (In \Cref{alg:sep-simplified}, we apply BFS to $G$ in every iteration, but we should think of this as applying BFS to the largest component from the previous iteration.)  Vertices that have many descendants in $T_t$ have more weights, and therefore, are more likely to be added to the separator in the next iteration.  Intuitively, vertices with more descendants are responsible for connecting many other vertices, and therefore should be in the separator.  A simplistic yet illuminating example is the star graph: the center of the star will be assigned $\Omega(\sqrt{n})$ weight, which is larger than $6h^2\Delta$, and hence will be added to the separator by KPR in the next iteration. (Note that in general, removing a single or $O(1)$ vertices in a graph of diameter $O(\sqrt{n})$ does not produce small components, e.g., the $\sqrt{n}\times \sqrt{n}$ planar grid.) If we only want to take vertices with many descendants to the separator, then we can simply choose a threshold on the number of descendants, say $\sqrt{n}$, and add a vertex to the separator if the number of its descendants crosses the threshold. Simple thresholding does not work since there is no basis for stopping after $O(h^2)$ iterations of KPR. The second purpose of our weighting scheme, described next, provides such a basis, limiting the number of iterations to $20h^2$. 

Suppose that we want to certify that our algorithm fails by constructing a $K_h$-minor model. We construct a (random) function $\phi$ that maps each vertex $i\in [h]$  of $K_h$ to a (unique) vertex $\phi(i)\in V$ and each edge $(i,j)\in K_h$ to a path, denoted by $\phi(i,j)$, in $G$. Two paths $\phi(i,j)$ and $\phi(a,b)$ have to be vertex-disjoint whenever $\{i,j\}\cap \{a,b\}=\emptyset$. For each vertex $i\in K_h$, $\phi(i)$ is a vertex in $V$ chosen uniformly at random. To find a path $\phi(i,j)$ for every edge $(i,j)\in K_h$, we rely on the set of trees $\mathcal{T} = \{T_1,T_2,\ldots, T_{h^2}\}$, which is produced by the algorithm\footnote{The algorithm produces $20h^2$ trees, and the constant $20$ is needed for the probabilistic analysis; we assume a constant $1$ for simplicity.} in \Cref{alg:sep-simplified} after all iterations. Specifically, edge $(i,j) \in K_h$ is assigned a unique tree $T_{ij}\in \mathcal{T}$, and the path $\phi(i,j)$ is the (only) path between\footnote{$T_{ij}$ might not include all the vertices of $G$. But it does include at least $(1-1/(10h^2))n$ vertices from $G$, which is the size of the largest component. Therefore, our probabilistic analysis remains feasible, albeit somewhat more complex.} $\phi(i)$ and $\phi(j)$ in $T_{ij}$. To obtain a  $K_h$-minor model from $\phi$, it must have the guarantee that whenever  $\{i,j\}\cap \{a,b\}=\emptyset$, the probability of  $\phi(i,j)\cap \phi(a,b) = \emptyset$ is high, say $1-\frac{1}{h^2}$, for the union bound to work.  Suppose that $T_{ij}$ is added to $\mathcal{T}$ before $T_{ab}$. If the subtree $T_{ij}(v)$ rooted at $v$ has a large size, say $n/2$ vertices, then with probability $1/4$, $\phi(i,j)$ will contain $v$ since $\phi$ maps vertices to $V$ uniformly at random. If in $T_{ab}$, $T_{ab}(v)$ also has a large size, then $\phi(a,b)$ will also contain $v$ with a good probability, say $1/4$. Thus, in this thought experiment, with a probability $1/16$, $\phi(i,j)\cap \phi(a,b) \not= \emptyset$.  (Recall we want this probability to be at most $1/h^2$.) By reweighting $w_t(v)$ after the construction of $T_{ij}$ in iteration $t$, $w_{t+1}(v)$ will be substantially larger than $w_t(v)$. Thus, in future iterations, if $v$ has accumulated too much weight, it will be added to the separator by KPR. Otherwise, $v$ has not accumulated too much weight, then in future trees, $v$ will have very few descendants, which directly translates to a low collision probability between $\phi(i,j)$ and $\phi(a,b)$. We formalize this intuition via the notion of a \EMPH{stochastic connector} (\Cref{def:stoc-conn}). The precise weight update rule in \Cref{eq:reweight} allows us to construct a stochastic connector, which in turn gives a $K_h$-minor model, whenever our algorithm fails to return a small separator. Given that balanced separators are deeply connected to sparsest cuts, we believe that the notion of a stochastic connector may be of broader interest.

\subsection{Other Related Work}

An orthogonal research direction motivated by the work of Alon, Seymour, and Thomas~\cite{AST90Conf,AST90} is to minimize the dependency on $h$ in the size of the separator. Recall that their separator has size $O(h^{3/2} \sqrt{n})$. The aforementioned result (\Cref{eq:shallow}) by Plotkin, Rao, and Smith~\cite{PRS94} implies a separator of size $O(h\sqrt{\log n}\sqrt{n})$, which is an improvement over $O(h^{3/2} \sqrt{n})$ whenever $h\gg \log(n)$. Kawarabayashi and Reed~\cite{KR10,Kawara11} claimed to improve the bound to $O(h\sqrt{n} + f(h))$ by following the deep structure decomposition of minor-free graphs in the graph minor theory of Robertson and Seymour. Recently, Spalding-Jamieson~\cite{SpJ25} gave a separator of size $O(h\polylog(h)\sqrt{n})$ based on a reweighted spectral partitioning technique. The algorithm uses very complicated subroutines, e.g., semidefinite programming and padded decomposition~\cite{CF25}; it is unclear if one could implement it in truly quadratic time.

\section{Stochastic Connector}

A central concept in the analysis of our algorithm is that of a~\EMPH{stochastic connector}, which is essentially a collection of rooted trees that have a certain probabilistic guarantee on the random paths sampled from them. We can show that if a stochastic connector of large size (having many trees) exists, then we can construct a clique minor of large size.

Let $T$ be a tree and $u,v$ be two vertices in $T$. We denote by $T[a,b]$ \EMPH{the (unique) path between $a$ and $b$ in $T$}. Suppose that $T$ is rooted at a vertex $r$. We denote by $\mathcal{R}(T) = \{T[r,v]: v\in V(T)\}$ the set of all rooted paths in $T$. By $P\sim \mathcal{R}(T)$, we denote a path $P$ sampled uniformly at random, or \emph{u.a.r}, from $\mathcal{R}(T)$. The sampling of $P$ can be realized in linear time by sampling a vertex $v\in V(T)$ u.a.r and returning the path $T[r,v]$.

\begin{definition}[Stochastic Connector]\label{def:stoc-conn} A \EMPH{stochastic connector of size $k$} for a given graph $G=(V,E)$ is a set of $k$ trees   $\mathcal{T} = \{T_1,T_2,\ldots, T_k\}$ such that:
\begin{enumerate}
    \item $|V(T_i)| \geq n-n/(10k)$ for every $i\in [k]$.
    \item  For every $i,j\in [k]$ such that $i\not= j$:
    \begin{equation}\label{eq:collision}
         \Pr_{P\sim \mathcal{R}(T_i), Q\sim \mathcal{R}(T_j)}[P\cap Q \not=\emptyset] \leq \frac{1}{5k^2}
    \end{equation}
\end{enumerate}
\end{definition}
The constants $10$ and $5$ in \Cref{def:stoc-conn} look somewhat arbitrary. 
We choose these constants for the union bound to work, and they better fit our algorithm given in \Cref{alg:sep-simplified}. Here, we note that a tree in $\mathcal{T}$ might not contain all vertices of $G$.

Let $H$ be a graph. Let $\mathcal{P}(G)$ be the set of all simple paths in $G$.  An \EMPH{almost-embedding} of $H$ to $G$ is a function $\phi$ that maps $V(H)\rightarrow V(G)$ and $E(H)\rightarrow \mathcal{P}(G)$ such that:
\begin{enumerate}
    \item  for every $e = uv \in E(H)$, $\phi(e)$ is a $\phi(u)$-to-$\phi(v)$ path.
    \item  for any two edges $e_1\not= e_2$ of $H$ that do not share an endpoint, $\phi(e_1)\cap \phi(e_2) = \emptyset$.
\end{enumerate}

An object similar to an almost-embedding was considered by Korhonen and Lokshtanov~\cite{KL24}.
Their techniques show that one can construct a $K_t$-minor model from an almost embedding (into $G$) of a graph of sufficient size.
We capture this in the following lemma, whose proof follows directly from the ideas of~\cite{KL24}, but for completeness we include a proof in \Cref{sec:subcubic}.

%Observe that the second condition also guarantees that two different vertices in $H$ are mapped to two distinct vertices in $G$: if $\phi(x) = \phi(y)$ for $x\not=y \in V(H)$, then by taking two edges $xa$ 

\begin{restatable}[{Corresponds to Lemma 4.3 in~\cite{KL24}}]{lemma}{almostembeddinglem}\label{lm:subcubic} Let $G$ be a connected graph with $m$ edges. For every $t\geq 1$, there is a graph $H$ with $|V(H)| + |E(H)| \leq 3t^2$ such that an almost-embedding of $H$ can be turned into a $K_t$-minor model of $G$ in $O(t^2m)$ time.
\end{restatable}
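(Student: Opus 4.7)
The plan is to construct a subcubic graph $H$ consisting of $t$ disjoint ``spines'', one per vertex of $K_t$, glued together by single matching edges. Concretely, for each $i\in[t]$ take a path $V_i$ on vertices $\{u_{i,j} : j\in[t]\setminus\{i\}\}$ in some fixed order, and for every unordered pair $\{i,j\}$ add one ``connecting'' edge $u_{i,j}u_{j,i}$. This yields $t(t-1)$ vertices, $t(t-2)$ spine-edges, and $\binom{t}{2}$ connecting edges, so $|V(H)|+|E(H)| = t(t-1)+t(t-2)+\binom{t}{2} \leq \tfrac{5}{2}t^2 \leq 3t^2$, and every vertex of $H$ has degree at most $3$.

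Given an almost-embedding $\phi$ of this $H$ into $G$, I will take the initial branch set for vertex $i$ of $K_t$ to be $B_i := \bigcup_{e\in E(V_i)} V(\phi(e))$, namely the image of the spine $V_i$ in $G$. Connectivity of $B_i$ is immediate since $V_i$ is connected and the images of consecutive spine-edges share the image of their common endpoint (no disjointness is asserted between edges sharing endpoints). The crucial step is disjointness: if some $x\in B_i\cap B_{i'}$ with $i\neq i'$, then $x$ lies on both $\phi(e)$ and $\phi(e')$ for some $e\in E(V_i)$ and $e'\in E(V_{i'})$; since $V_i$ and $V_{i'}$ are vertex-disjoint in $H$, the edges $e$ and $e'$ share no endpoint, and the almost-embedding property forces $\phi(e)\cap\phi(e')=\emptyset$, a contradiction. (I will treat $\phi$ as injective on $V(H)$, which is implicit whenever each $\phi(e)$ is a genuine nontrivial path.)

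To produce the edges of the $K_t$-minor model, for each pair $i<i'$ I will absorb the internal vertices of the connecting path $P_{ii'}:=\phi(u_{i,i'}u_{i',i})$ into $B_i$. Two further facts, both consequences of the almost-embedding property together with the observation that distinct pairs $\{i,j\}\ne\{i',j'\}$ involve four distinct vertices of $H$, need to be checked: (i) internal vertices of $P_{ii'}$ lie in no other spine image $B_k$ for $k\notin\{i,i'\}$, because any edge of $V_k$ shares no endpoint in $H$ with $u_{i,i'}u_{i',i}$; and (ii) distinct connecting paths are pairwise internally vertex-disjoint, because distinct connecting edges share no endpoint in $H$. Together these imply that after absorption the augmented branch sets remain connected and pairwise disjoint, while the last edge of each $P_{ii'}$ witnesses an edge of $G$ between $B_i$ and $B_{i'}$, completing the $K_t$-minor model. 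The running time is $O(t^2 m)$: there are $\binom{t}{2}$ connecting paths and $O(t)$ spine-edge images, each of length $O(m)$, and producing the output amounts to listing their vertices and assigning them to the appropriate branch sets.

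The subtle point I expect to be the bottleneck of the write-up is the bookkeeping around shared endpoints in $H$. Within a single spine, adjacent edges share a port vertex and their image paths can overlap in arbitrarily complicated ways, so one must take $B_i$ to be the \emph{union} of spine-edge images rather than attempt to extract a cleaner subgraph. Replacing each high-degree vertex of $K_t$ by a long path of ports is precisely what localizes all unavoidable overlaps inside a single branch set, at the price of only a linear blow-up per spine, which keeps the total size of $H$ quadratic in $t$ as required.
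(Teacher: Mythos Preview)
Your construction of $H$ and the paper's differ cosmetically but follow the same blueprint. The paper takes $H$ to be the $2$-subdivision of $K_t$: the original vertices $v_i$ remain at degree $t-1$, the initial branch set $C'_i$ is the union of the images of the $t-1$ edges incident to $v_i$, and the ``middle'' edges $a_{i,j}b_{i,j}$ play the role of your connecting edges. You instead replace each $v_i$ by a path of $t-1$ ports, so your spines play the role of the paper's stars. Either way the argument is identical: images of edges within one gadget form a connected branch set, distinct gadgets are vertex-disjoint in $H$ so their images are disjoint in $G$, and the remaining edges connect the branch sets pairwise. Your $H$ is subcubic, which is pleasant but not needed for the lemma as stated.

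There is, however, a real gap in your absorption step. You correctly observe in (i) that the internal vertices of $P_{ii'}$ avoid every $B_k$ with $k\notin\{i,i'\}$, but they need \emph{not} avoid $B_{i'}$: the connecting edge $u_{i,i'}u_{i',i}$ shares the endpoint $u_{i',i}$ with the spine edges of $V_{i'}$ incident to that port, so the almost-embedding imposes no disjointness between $P_{ii'}$ and $B_{i'}$. Absorbing \emph{all} internal vertices of $P_{ii'}$ into $B_i$ can therefore create $B_i\cap B_{i'}\ne\emptyset$, and your claim ``after absorption the augmented branch sets remain pairwise disjoint'' does not follow from (i) and (ii) alone. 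The fix---which is exactly what the paper does---is to first pass to a \emph{minimal} subpath of $P_{ii'}$ with one endpoint in $B_i$ and the other in $B_{i'}$; by minimality its internal vertices lie in neither $B_i$ nor $B_{i'}$, and by your (i) in no other $B_k$, so absorbing those into $B_i$ is safe. (Also watch the edge case $t=2$, where each spine has a single vertex and no spine edges, so your $B_i$ as defined is empty; either include the port images $\phi(u_{i,j})$ in $B_i$ by definition or handle small $t$ separately.)
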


The key result of this section is the following lemma, showing that the existence of a large stochastic connector implies the existence of a large clique minor.

\begin{lemma}\label{lm:connector-to-minor} Given a stochastic connector $\mathcal{T}$ of size $k$ in a connected graph $G$ with $m$ edges, for any $t$ such that $20t^2\leq k$, we can construct a  $K_t$-minor model of $G$ in $O(km)$ time with probability at least $2/5$.
\end{lemma}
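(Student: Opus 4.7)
The plan is to use \Cref{lm:subcubic} to reduce the problem of building a $K_t$-minor model to that of building an almost-embedding of a small graph $H$ into $G$, and then to construct such an almost-embedding at random using the stochastic connector. First, I take $H$ to be the graph guaranteed by \Cref{lm:subcubic}, with $|V(H)| + |E(H)| \leq 3t^2$. Since $|E(H)| \leq 3t^2 \leq 3k/20 < k$, I can assign to each edge $e \in E(H)$ a distinct tree $T_e \in \mathcal{T}$ with root $r_e$. For each vertex $v \in V(H)$ I then sample $\phi(v)$ independently and uniformly from $V(G)$, and for each edge $e = \{i,j\}$ of $H$ I tentatively set $\phi(e) := T_e[\phi(i), \phi(j)]$, which is well-defined provided $\phi(i), \phi(j) \in V(T_e)$.

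The heart of the argument is to show that this random $\phi$ is a valid almost-embedding with probability at least $2/5$, by bounding two failure modes. The first mode is that some endpoint $\phi(v)$ lies outside $V(T_e)$ for an edge $e$ incident to $v$; since $|V(T_e)| \geq n - n/(10k)$, this has probability at most $1/(10k)$ per incidence, and a union bound over the at most $2|E(H)| \leq 6t^2$ such incidences gives total probability at most $6t^2/(10k) \leq 3/100$. The second mode is that $\phi(e) \cap \phi(f) \neq \emptyset$ for some non-adjacent pair $e,f \in E(H)$. Using the rooted-tree identity $T_e[u,v] \subseteq T_e[r_e, u] \cup T_e[r_e, v]$, such a collision forces one of at most four collisions between a rooted path of $T_e$ (ending at an endpoint of $e$) and a rooted path of $T_f$ (ending at an endpoint of $f$). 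Because the relevant samples $\phi(\cdot)$ are independent and uniform on $V(G)$, their conditional distributions given membership in $V(T_e)$ and $V(T_f)$ are exactly uniform on those vertex sets, matching the sampling from $\mathcal{R}(T_e), \mathcal{R}(T_f)$ in \Cref{def:stoc-conn}; hence each of these four collisions has (joint) probability at most $1/(5k^2)$. A union bound over the at most $|E(H)|^2/2 \leq 9t^4/2$ non-adjacent pairs then gives failure probability at most $4 \cdot (9t^4/2) / (5k^2) = 18t^4/(5k^2) \leq 9/1000$, using $t^2 \leq k/20$. Combining the two bounds yields total failure probability at most $3/100 + 9/1000 < 3/5$.

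On success, I convert $\phi$ into a $K_t$-minor model via \Cref{lm:subcubic} in $O(t^2 m) \subseteq O(km)$ time; computing the tree paths takes $O(|E(H)| \cdot n) = O(t^2 m) \subseteq O(km)$ time (using $n \leq m$ for connected $G$), and verifying the output against the $K_t$-minor-model definition takes $O(t^2 m) \subseteq O(km)$ time, so the overall running time is $O(km)$. The main technical care is in the second failure mode, where I must make sure that sampling $\phi$ uniformly from $V(G)$ rather than directly from individual trees does not break the clean bound coming from \Cref{def:stoc-conn}; this works out cleanly because the relevant conditional distributions match, and independence of the $\phi(v)$'s across $V(H)$ translates directly to independence of the two rooted paths in the two trees (after conditioning on membership).
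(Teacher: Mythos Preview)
Your proposal is correct and follows essentially the same approach as the paper: invoke \Cref{lm:subcubic}, sample $\phi$ on vertices uniformly from $V(G)$, route each edge through its own tree, and bound the two failure modes (endpoint misses tree; non-adjacent paths collide) via the rooted-path decomposition $T_e[u,v]\subseteq T_e[r_e,u]\cup T_e[r_e,v]$ and a union bound. Your version is slightly tighter (using $|V(H)|+|E(H)|\le 3t^2$ rather than the paper's looser $20t^2$) and is more explicit about why the uniform-on-$V(G)$ sampling matches the $\mathcal R(T_e)$ sampling in \Cref{def:stoc-conn} after conditioning on membership, which the paper glosses over.
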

\begin{proof}
    By \Cref{lm:subcubic}, it suffices to construct an almost-embedding $\phi$ of any graph $H$ with  $|V(H)| + |E(H)| \leq 20t^2$ into $G$ whenever $k\geq 20t^2$.  Our construction is randomized and as follows:
    \begin{enumerate}
        
        \item  For every vertex $v\in V(H)$, we assign $\phi(v)$ to be a vertex in $V(G)$ chosen u.a.r.
        \item For each edge  $e = uv \in  E(H)$,  we associate $e$ with a distinct tree in $\mathcal{T}$, denoted by $T_e$. This is possible since $k\geq |E(H)|$. Then we assign $\phi(e)$ to be the path $T_e[u,v]$ in $T_e$. (It could be that $u$ or $v$ is not in $T_e$,  and in this case, we set $\phi(e) = \emptyset$.)
    \end{enumerate}

Now we bound the probability that $\phi$ is valid.  We say that two edges are \EMPH{adjacent} if they share an endpoint, and \EMPH{non-adjacent} otherwise. We say that $\phi$ is \EMPH{invalid} when at least one of the following events happens:
\begin{itemize}
    \item \textbf{Event $A$:}  there exists $e = uv \in E(H)$ such that $\{\phi(u),\phi(v)\}\not\subseteq V(T_e)$. 
    \item \textbf{Event $B$:} there exist two non-adjacent $e_1,e_2$ such that, conditioning on $\Bar{A}$,  $\phi(e_1)\cap \phi(e_2)\not=\emptyset$.  
\end{itemize}

     By the first property of $\mathcal{T}$ in \Cref{def:stoc-conn}, for any tree $T\in \mathcal{T}$ and any vertex $v\in H$, $\pr[\phi(v)\not\in V(T)]\leq 1/(10k)$. Thus, for every edge $uv$,  by the union bound, $\Pr[\{\phi(u),\phi(v)\}\not\subseteq V(T_e)] \leq 2/(10k) = 1/(5k)$. Then by union bound over at most $20t^2\leq k$ edges of $H$, we have:
      \begin{equation}\label{eq:PrA}
          \Pr[A]\leq k\cdot 1/(5k)\leq 1/5
      \end{equation}

     To bound $\pr[B]$, for every edge $e = uv$, let $\hat{\phi}(e) = T_e[u,r_e]\cup T_e[r_e,v]$ where  $r_e$ is the root of $T_e$. Then for any two non-adjacent edges $e_1 = (u_1,v_1)$ and $e_2 = (u_2,v_2)$, we have:
     \begin{equation}
     \begin{split}
        \pr[\phi(e_1)\cap \phi(e_2) \not= \emptyset] &\leq \pr[\hat{\phi}(e_1)\cap \hat{\phi}(e_2) \not= \emptyset] \\
         &\leq \sum_{x\in\{u_1,v_1\}}\sum_{y\in \{u_2,v_2\}} \pr[T_{e_1}[x,r_{e_1}]\cap T_{e_2}[y,r_{e_2}] \not= \emptyset]\\
         &\leq \frac{4}{5k^2} \qquad \text{(by union bound and \Cref{eq:collision})}
     \end{split}
     \end{equation}
     Thus, by taking the union bound over all non-adjacent pairs, we have:
     \begin{equation}\label{eq:PrB}
         \pr[B]\leq \frac{k^2}{2}\cdot \frac{4}{5k^2} \leq 2/5
     \end{equation} 

     By union bound, and \Cref{eq:PrA} and \Cref{eq:PrB}, we get:
     \begin{equation*}
     \pr[\phi \text{ is invalid}] \leq  \Pr[A] + \Pr[B] \leq 3/5 
     \end{equation*}
     
    Thus, with probability at least $2/5$, $\phi$ is a valid almost-embedding. Since finding a path in each tree in $\mathcal{T}$ takes $O(kn)$ time, by \Cref{lm:subcubic}, we can construct a $K_t$-minor model in time $O(k(n+m)) = O(km)$. 
\end{proof}

\section{Constructing Balanced Separators}\label{sec:balanced-sep}

In this section, we prove \Cref{thm:main}, which we restate below for convenience.
\MainTheorem*

The algorithm for constructing a balanced separator in \Cref{alg:sep-simplified} assumes that the input graph is $K_h$-minor-free. Here in \Cref{thm:main}, we do not make this assumption; instead, we want to output a $K_h$-minor when we fail to find a good balanced separator. We do so by making two simple changes to \Cref{alg:sep-simplified}. First, the KPR decomposition in every iteration $t$ returns a tuple $(C^*_t,S_t,M_t)$ where $M_t$ is a $K_h$-minor model whenever it fails to find a small separator $S_t$ such that the largest component $C^*_t$ has small weak diameter. By \Cref{lm:KPR}, either $M_t= \emptyset$---in this case, the KPR returns the pair $(C^*_t,S_t)$---or $M_t\not= \emptyset$, and in this case  $M_t$ is a $K_h$-minor model. Second, instead of returning a separator at the end of the algorithm\footnote{If $G$ is $K_h$-minor-free, then the analysis in this section shows that the algorithm will never reach this line.} as in \Cref{alg:sep-simplified}, we run the algorithm  $\textsc{FindMinor}(G, \{T_1,T_2,\ldots, T_{k}\})$ with $k=20 h^2$ to construct a $K_h$-minor by \Cref{lm:connector-to-minor}. Note that \Cref{lm:connector-to-minor} requires the set of trees $\{T_1,T_2,\ldots, T_{k}\}$ to be a stochastic connector. Our analysis in this section shows that this indeed is the case.   Finally, as noted in \Cref{subsec:contribution}, to reduce the running time from $O(\poly(h)m)$ to $O(\poly(h)n)$, we construct a $K_h$-minor model directly as formally stated in the following lemma, whose proof will be given in \Cref{sec:sparse-reduction}.

\begin{figure*}[ht!]
\centering%\small
\begin{algorithm}
\textul{$\textsc{FindSeparator}(G=(V,E))$:}\+
\\      if $|E|\geq 100h^2 n$ \hspace{1cm} \Comment{$G$ is dense}\+
\\         return $K_h$-minor model from \Cref{lem:minors-in-dense}\-
\\   $w_1(v)\leftarrow 40$ for every $v\in V$
\\   $S\leftarrow \emptyset\qquad$ \hspace{1cm} \Comment{The separator}  
\\   $k\leftarrow 20h^2$ 
\\   for $t\leftarrow 1$ to $k$  \hspace{1cm} \Comment{$O(h^2)$ iterations}  \+
\\      \hl{$(C^*_t, S_t, M_t)\leftarrow \kpr(\angbracket{G,w_t},\lfloor\sqrt{n}/6h^2\rfloor,h)$} $\qquad$ \Comment{KPR with $\Delta = \lfloor\sqrt{n}/6h^2\rfloor$, applying BFS $h$ times}
\\      if $M_t\not=\emptyset$ \hspace{1cm} \Comment{$K_h$-minor model found}\+
\\          return $M_t$  \-
\\      $S\leftarrow S\cup S_t$
\\      if $ |C^*_t|\leq (1-\frac{1}{10k})n$ $\qquad$ \Comment{The current separator is $(1-\frac{1}{10k})$-balanced}\+
\\          return $S$    \-  
\\      $c_t\leftarrow$ an arbitrary vertex in $C^*_t$
\\      $T_{t}\leftarrow\bfs(\angbracket{G,w_t},c_t,\sqrt{n})$ $\qquad$ \Comment{BFS truncated at radius $\sqrt{n}$}
\\      for every $v\in V(T_{t})$ \+
\\        $w_{t+1}(v)\leftarrow w_t(v) + \ceil{\frac{|T_{t}(v)|k^3}{\sqrt{n}}\cdot w_t(v)}  \qquad$\Comment{Reweighting vertices for next iteration} \-\-
\\   return $\textsc{FindMinor}(G, \{T_1,T_2,\ldots, T_{k}\})$ \hspace{1cm} \Comment{Find $K_{h}$-minor model  by \Cref{lm:connector-to-minor}}
\end{algorithm}
\caption{An algorithm that finds a $(1-\frac{1}{200h^2})$-balanced separator with size $O(\poly(h)\sqrt{n})$ of $G$ if exists, or outputs a $K_h$-minor model otherwise.}
\label{alg:full}
\end{figure*}

\begin{restatable}{lemma}{SparseReduction}\label{lem:minors-in-dense}
There is a~deterministic algorithm that, given any $n$-vertex graph $G$ with at least $100 h^2 n$ edges, reads the first $100h^2n$ edges of $G$ and outputs a~$K_h$-minor model of~$G$ in $O(h^6 n)$ time.
\end{restatable}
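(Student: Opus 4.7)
The plan is a two-phase algorithm: peel $G$ down to a subgraph $H$ of large minimum degree, then extract a $K_h$-minor model of $H$ by constructing an almost-embedding of a small auxiliary graph and feeding it to \Cref{lm:subcubic}.

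For the peeling phase, I would maintain the subgraph defined by the first $100 h^2 n$ edges of $G$ in a bucket queue keyed by current degree, and iteratively remove any vertex of current degree at most $50 h^2$. Each removal costs amortized $O(1)$ and there are at most $n$ of them, so peeling runs in $O(h^2 n)$ time. Because at most $n$ vertices are removed and each destroys at most $50 h^2$ edges, at most $50 h^2 n$ of the $100 h^2 n$ edges are lost; the residual subgraph $H$ therefore retains $\geq 50 h^2 n$ edges, is non-empty, and has minimum degree $\geq 51 h^2$. (Replacing $H$ by its largest connected component preserves this lower bound.)

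For the extraction phase, I would invoke \Cref{lm:subcubic} with $t = h$ and the connected input graph $H$: the lemma supplies an auxiliary graph $H_{\mathrm{aux}}$ with $|V(H_{\mathrm{aux}})| + |E(H_{\mathrm{aux}})| \leq 3 h^2$ such that any almost-embedding of $H_{\mathrm{aux}}$ into $H$ can be turned into a $K_h$-minor model in $O(h^2 |E(H)|) = O(h^4 n)$ time. It then suffices to build such an almost-embedding in $\poly(h)$ time. To this end, I would greedily assign the vertex images $\phi(v)$, $v \in V(H_{\mathrm{aux}})$, to arbitrary distinct vertices of $H$, and process the edges $e = uv \in E(H_{\mathrm{aux}})$ in any order: for each $e$, run a BFS from $\phi(u)$ in the residual graph (namely $H$ minus the internal vertices of $\phi(e')$ for already-processed non-adjacent $e'$), and set $\phi(e)$ to the resulting short $\phi(u)$-$\phi(v)$ path. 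The feasibility of this greedy step is controlled by a counting invariant: with each path of length $O(1)$, the total number of forbidden vertices at any moment is $O(h^2)$, which is far below the $51 h^2$ minimum-degree threshold, so every vertex still has $\Omega(h^2)$ free neighbors.

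The main obstacle is ensuring that the BFS runs in $\poly(h)$ time and returns a short path, since minimum degree alone does not imply small diameter. The cleanest fix is to restrict the entire extraction to a locally dense core of $H$ of bounded diameter---for instance, the two-hop ball $N^{\leq 2}_H(v^\ast)$ around a highest-degree vertex $v^\ast$, which has size $O(h^4)$ and diameter at most $4$. Within this core, every BFS finishes in $\poly(h)$ time and returns a path of length at most $4$, so the $O(h^2)$ edges of $H_{\mathrm{aux}}$ can be handled in $\poly(h)$ total. Summing the $O(h^2 n)$ peeling time, the $\poly(h)$ almost-embedding construction, and the $O(h^4 n)$ conversion cost of \Cref{lm:subcubic} yields the claimed $O(h^6 n)$ total runtime, with any remaining $\poly(h)$ slack absorbed into the $h^6$ exponent.
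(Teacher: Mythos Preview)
Your two-hop ball ``fix'' does not work, for two separate reasons.

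First, the size bound $|N^{\leq 2}_H(v^\ast)| = O(h^4)$ is simply false: peeling gives a \emph{lower} bound of $50h^2$ on degrees, not an upper bound, so the highest-degree vertex may have degree $\Theta(n)$ and the ball may be all of $H$. Your $\poly(h)$ runtime claim for each BFS therefore has no justification.

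Second, and more seriously, the correctness of the greedy fails inside the ball. Only $v^\ast$ and its neighbours are guaranteed to keep their full $H$-degree in $H[B]$; a vertex $b$ at distance exactly~$2$ from $v^\ast$ may have all but one of its $H$-neighbours outside $B$, hence degree~$1$ in $H[B]$. Concretely, let $v^\ast$ have neighbours $a_1,\ldots,a_k$ with $k=50h^2$, let each $a_i$ have its remaining $k-1$ neighbours in a private clique $C_i$, and let the $C_i$ be pairwise non-adjacent. This graph has minimum degree $\geq 50h^2$, yet in $H[B]$ the only $a_i$--$a_j$ path (for $i\neq j$) goes through $v^\ast$. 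As soon as one processed edge uses $v^\ast$ as an internal vertex, every later non-adjacent edge whose endpoints map to two different $a_i$'s is stuck: the residual graph is disconnected. So the invariant ``every vertex still has $\Omega(h^2)$ free neighbours'' is false in $H[B]$, and the almost-embedding cannot be completed.

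The underlying issue is that peeling alone yields large minimum degree but places no bound on $|V(H)|$, so you never get the density $\delta(H)=\Omega(|V(H)|)$ needed for common-neighbour arguments. The paper closes this gap with a contraction step (\Cref{lem:densifier}): from the $100h^2 n$-edge graph it builds, via iterated low-degree deletions, high-degree trimming, and contractions of edges with few common neighbours, an $H$-\emph{minor} on at most $2d$ vertices with $\delta(H)\geq d$ for $d=100h^2$. Once $|V(H)|\leq 2d$ and $\delta(H)\geq d$, any two vertices share $\Omega(d)$ neighbours and a greedy $2$-subdivided clique construction succeeds (\Cref{lem:2-subd-or-denser,lem:cm-super-dense}). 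You need an analogue of this contraction step; without it the extraction phase cannot be made to work.
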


Now we have all the ideas needed for proving \Cref{thm:main}.

\begin{namedproof}[Proof of \Cref{thm:main}] 
By \Cref{lem:minors-in-dense} we can assume that $m= O(h^2 n)$ since we can find a $K_h$-minor model in $O(h^6n)$ time whenever $m \geq 100h^2n$. 

If the algorithm returns a separator $S$, in~\Cref{lm:sep-size} of~\Cref{sec:correctness}, we will show that $|S| = O(h^{11}\sqrt{n})$. Otherwise, we will show in~\Cref{lm:connector} of~\Cref{sec:correctness} that the set of trees $\{T_1, T_2,\ldots, T_k\}$ found by the algorithm is a~stochastic connector. Thus, the algorithm will return a $K_{h}$-minor model either in one of the iterations or by calling $\textsc{FindMinor}(G, \{T_1,T_2,\ldots, T_{k}\})$. 

If we are not required to output a minor, then in the last line of the algorithm, we return $\bot$ instead of calling $\textsc{FindMinor}(G, \{T_1,T_2,\ldots, T_{k}\})$. The running time, as we will show in \Cref{lm:time}, is $O(h^3m + h^{11}n) = O(h^{11}n)$. 

If we are required to output a minor, then by \Cref{lm:connector-to-minor}, the total running time is:
\begin{equation*}
    O(h^3m + h^{11}n) + O(h^2m) = O(h^3m +  h^{11}n) = O(h^{11} n).
\end{equation*}
Also by \Cref{lm:connector-to-minor}, the algorithm is randomized and succeeds with a probability of at least $2/5$.

Since the separator output by \Cref{alg:full} is only $(1-\frac{1}{200h^2})$-balanced, we need to run the algorithm $O(h^2)$ times to get a $2/3$-balanced separator. The total separator size and running time are $|S| = O(h^{13}\sqrt{n})$ and $ O(h^{13}n)$, respectively.  
\hfill$\square$\null
\end{namedproof}

\subsection{Correctness}\label{sec:correctness}

Here, we only consider weight functions on vertices, not on edges. If $\angbracket{G,w}$ is a graph with a weight function $w$, we say that $X\subseteq V(G)$ has weak diameter at most $D$ if $d_{\angbracket{G,w}}(u,v) \leq D$ for every $u,v\in X$.

Let $E_{t,v}$ be the event that a random rooted path  $P\sim \mathcal{R}(T_t)$ contains $v$. ($T_t$ is the tree at iteration $t$.)  We first show that the algorithm maintains five invariants.

\begin{lemma}\label{lm:invariant} \Cref{alg:full} maintains the following invariants at every iteration $t$ if it does not return a $K_h$-minor model or a separator $S$:
\begin{enumerate}[label=\textbullet, ref=\arabic*]
    \item\label{it:1} \textbf{Invariant 1:} $\Pr[E_{i,v}]\leq (20k^3\,\sqrt{n})^{-1}\,w_{t+1}(v)$ for every $i \leq t$ and every $v$.
    \item\label{it:2}  \textbf{Invariant 2:} $\Pr[E_{i,v}\cap E_{j,v}] \leq (10k^6\,n)^{-1}\,w_{t+1}(v)$ for every $i\not= j \leq t$ and every $v$.
    \item\label{it:3}   \textbf{Invariant 3:} $\sum_{v\in V}w_t(v)\leq (40 + (k^3 + 1)(t-1))n$.
    \item\label{it:4}  \textbf{Invariant 4:} $|V(T_t)|\geq (1-\frac{1}{10k})n$.
    \item\label{it:5}  \textbf{Invariant 5:} $w_t(v)\geq 40$ for every $v$.
\end{enumerate}
\end{lemma}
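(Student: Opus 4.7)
The plan is to prove all five invariants simultaneously by induction on $t$, establishing them in an order in which the later invariants can rely on the earlier ones. The base case $t=1$ is essentially automatic from $w_1\equiv 40$: Invariants 3 and 5 are immediate, Invariants 1 and 2 are vacuous, and Invariant 4 is handled exactly as in the inductive step.

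For the inductive step, fix an iteration $t$ that proceeds without returning, so $M_t = \emptyset$ and $|C^*_t| > (1-\tfrac{1}{10k})n$. Invariant 5 is immediate because each weight update adds only a non-negative quantity. Invariant 4 follows from \Cref{lm:KPR}: $C^*_t$ has vertex-weighted weak diameter at most $6h^2\lfloor\sqrt{n}/(6h^2)\rfloor \leq \sqrt{n}$, which in particular forces $w_t(c_t) \leq \sqrt{n}$ and $d_{\angbracket{G,w_t}}(c_t,v)\leq \sqrt{n}$ for every $v\in C^*_t$, so the radius-$\sqrt{n}$ BFS tree $T_t$ rooted at $c_t$ contains all of $C^*_t$. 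For Invariant 3, I would pull the ceilings out and reshape the sum along ancestor-descendant pairs:
\[
\sum_{v \in V(T_t)} \left\lceil \tfrac{|T_t(v)| k^3 w_t(v)}{\sqrt{n}} \right\rceil \;\leq\; \tfrac{k^3}{\sqrt{n}} \sum_{u \in V(T_t)} d_{\angbracket{G,w_t}}(c_t,u) + |V(T_t)| \;\leq\; k^3 n + n = (k^3+1)n,
\]
using that $T_t$ is a vertex-weighted shortest-path tree (so the root-to-$u$ path has total weight $d_{\angbracket{G,w_t}}(c_t,u)$) and that the BFS radius is $\sqrt{n}$.

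For Invariant 1, write $\Pr[E_{i,v}] = |T_i(v)|/|V(T_i)|$ whenever $v\in V(T_i)$, and directly from the update rule $w_{i+1}(v) \geq w_i(v) k^3 |T_i(v)|/\sqrt{n}$ extract $|T_i(v)| \leq w_{i+1}(v)\sqrt{n}/(k^3 w_i(v))$. Combining with $|V(T_i)| \geq n/2$ (Invariant 4 at iteration $i$), $w_i(v)\geq 40$ (Invariant 5), and monotonicity of weights yields
\[
\Pr[E_{i,v}] \;\leq\; \frac{2\,w_{i+1}(v)}{k^3\, w_i(v)\,\sqrt{n}} \;\leq\; \frac{w_{i+1}(v)}{20\, k^3 \sqrt{n}} \;\leq\; \frac{w_{t+1}(v)}{20\, k^3 \sqrt{n}}.
\]

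The hard part will be Invariant 2: the naive route of multiplying two Invariant-1 bounds produces a factor $w_{t+1}(v)^2$, which is far too weak once the weights have grown. The fix is to retain the \emph{tight} pre-monotonicity estimate $\Pr[E_{i,v}] \leq 2 w_{i+1}(v)/(k^3 w_i(v)\sqrt{n})$ and exploit the multiplicative weight growth together with the independence of the two samples drawn from $T_i$ and $T_j$:
\[
\Pr[E_{i,v}\cap E_{j,v}] \;\leq\; \frac{4\,w_{i+1}(v)\,w_{j+1}(v)}{k^6\, w_i(v)\, w_j(v)\, n}.
\]
Taking WLOG $i<j$, monotonicity gives $w_j(v) \geq w_{i+1}(v)$, which cancels the $w_{i+1}(v)$ factor in the numerator; applying $w_i(v)\geq 40$ from Invariant 5 then yields $\Pr[E_{i,v}\cap E_{j,v}] \leq w_{j+1}(v)/(10\, k^6 n) \leq w_{t+1}(v)/(10\, k^6 n)$, closing the induction.
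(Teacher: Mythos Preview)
Your proof is correct and follows essentially the same approach as the paper: the same charging argument for Invariant~3 (you phrase the double sum as $\sum_u d_{\angbracket{G,w_t}}(c_t,u)$, the paper phrases it as a per-vertex charge, but these are identical computations), and the same cancellation through weight monotonicity for Invariant~2. The only cosmetic difference is that for Invariant~2 the paper handles just the new case $j=t$ (using the already-packaged Invariant~1 bound $\Pr[E_{i,v}]\leq w_t(v)/(20k^3\sqrt n)$) and appeals to induction for $i,j\le t-1$, whereas you prove all pairs $i<j\le t$ directly from the raw estimate $\Pr[E_{i,v}]\leq 2w_{i+1}(v)/(k^3 w_i(v)\sqrt n)$; both routes perform the same telescoping cancellation $w_{i+1}(v)/w_j(v)\le 1$. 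One nitpick: Invariant~1 is not vacuous at $t=1$ (you still need $i=1$), but your direct argument for Invariant~1 does not actually use the induction hypothesis, so it covers this case as well.
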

\begin{proof} Initially, $t = 1$ and the algorithm sets $w_1(v) = 40$ for every $v$. Since $w_t(v)$ only increases as $t$ increases, Invariant~\ref{it:5} holds.  Herein, we focus on four other invariants. %we say that a vertex $v$ is \EMPH{on level $\ell$} if $ d_{\angbracket{G,w_t}}(c_{t+1},v) - w_t(v) < \ell \leq d_{\angbracket{G,w_t}}(c_{t+1},v)$. Note that a vertex $v$ might belong to multiple (consecutive) levels\note{H: draw a figure?}. 

 For Invariant~\ref{it:4}, we first note that since the algorithm does not return a $K_{h}$-minor model or a separator $S$ at iteration $t$, it must be that:
\begin{itemize}
    \item The largest connected component $C^*_t$ of $G - S_t$ has size at least $(1-1/{10k})n$.
    \item $C^*_t$ has weak (vertex-weighted) diameter $6h^2\Delta\leq \sqrt{n}$ by \Cref{lm:KPR}. 
\end{itemize}
Thus, for any vertex $c_t\in V(C^*)$, every other vertex $v\in V(C^*_t)$, $d_{\angbracket{G,w_t}}(c_t,v)\leq \sqrt{n}$. Thus, the BFS tree rooted at $c_t$ truncated at radius $\sqrt{n}$ contains all vertices of $C^*_t$. Thus, we have:
\begin{equation}\label{eq:size-T_t}
    |V(T_t)|\geq |V(C^*_t)|\geq (1-\frac{1}{10k})n~,
\end{equation}
giving Invariant~\ref{it:4}.

To show Invariant~\ref{it:3}, let $W_t = \sum_{v\in V}w_t(v)$. By induction, we assume that $W_t\leq  (40 + (k^3+1)(t-1))n$. We now show that $W_{t+1}\leq W_t+(k^3+1)n \leq (40 + (k^3+1)t)n$ by a simple charging scheme. Observe that, by the definition of $w_{t+1}(v)$:
\begin{equation}\label{eq:reweight-ceil}
  w_{t+1}(v)\leq  w_t(v)\left(1 + \frac{|T_{t}(v)|k^3}{\sqrt{n}}\right) + 1,
\end{equation}

For each vertex $v\in V(T_{t})$, the algorithm add at most $\frac{|T_{t}(v)|k^3}{\sqrt{n}}\cdot w_t(v) + 1$  to $w_t(v)$. We will charge an amount of $(k^3/\sqrt{n}) w_t(v)$ to each vertex $u\in T_{t}(v)$. Since  $u$ is only charged by its ancestors, the total charge of each vertex $u\in V(G)$ at iteration $t$ is:
\begin{equation*}
    \sum_{v  \text{ ancestor of }u \text{ in }T_t} \frac{w_t(v) k^3}{\sqrt{n}}  \leq \frac{w_t(T_{t}[u,c_{t}])k^3}{\sqrt{n}} \leq k^3
\end{equation*}
since $T_t$ has (vertex-weighted) radius at most $\sqrt{n}$. Thus,  the total weight increase of all vertices at iteration $t$ is at most $n\cdot k^3 + n$ (including the $+1$ for each vertex in \Cref{eq:reweight-ceil}), giving Invariant~\ref{it:3}.

For Invariant~\ref{it:1}, we note that $|V(T_t)|\geq n/2$ by Invariant~\ref{it:4}. Furthermore, by the weight update rule, we have:
\begin{equation}\label{eq:T-bound}
    w_{t+1}(v) \geq \frac{|T_{t}(v)|k^3}{\sqrt{n}} w_t(v) \Rightarrow |T_t(v)|\leq \frac{\sqrt{n}\,w_{t+1}(v)}{k^3\,w_t(v)}
\end{equation}
Now we observe that if $v \notin V(T_t)$, then $\Pr[E_{t,v}]=0$, and otherwise:
\begin{equation}
\begin{split}
     \Pr[E_{t,v}] &= \frac{|T_{t}(v)|}{|V(T_t)|}  \leq  \frac{2|T_{t}(v)|}{n} \qquad \text{(since $|V(T_t)|\geq n/2$)}\\
     &\leq   \frac{2\sqrt{n}\,w_{t+1}(v)}{k^3 n\,w_t(v)} \qquad \text{(by \Cref{eq:T-bound})}\\
     &\leq \frac{w_{t+1}(v)}{20k^3\sqrt{n}}  \qquad \text{(since $w_t(v)\geq 40$ by invariant~\ref{it:5}),}
\end{split}
\end{equation}
By induction, for any $i\leq t-1$, we have $$\Pr[E_{i,v}]\leq \frac{w_{t}(v)}{20k^3\sqrt{n}} \leq \frac{w_{t+1}(v)}{20k^3\sqrt{n}},$$ since $w_t(v)$ only increases as $t$ increases.  Thus, Invariant~\ref{it:1} holds.

It remains to show Invariant~\ref{it:2}. Observe that $E_{i,v}$ and $E_{t,v}$ are independent for any $i \leq t-1$. Thus, we have:

\begin{equation*}
\begin{split}
    \pr[E_{i,v}\cap E_{t,v}] &= \Pr[E_{i,v}] \Pr[E_{t,v}]  \leq \frac{w_t(v)}{20 k^3\sqrt{n}}\frac{|T_{t}(v)|}{|V(T_{t})|} \qquad \text{(by Invariant~\ref{it:1} and $i\leq t-1$)} \\
&\leq \frac{w_t(v)}{20 k^3\sqrt{n}} \frac{2\sqrt{n}\,w_{t+1}(v)}{k^3 n\,w_t(v)} \qquad \text{(by \Cref{eq:T-bound} and $|V(T_t)|\geq n/2$)}\\
&\leq \frac{w_{t+1}(v)}{10k^6n},
\end{split}
\end{equation*}
giving Invariant~\ref{it:2}. 
\end{proof}

Next, we bound the size of the separator if the algorithm returns one.

\begin{lemma}\label{lm:sep-size} If~\Cref{alg:full} returns a separator $S$, then $|S|\leq O(h^{11}\sqrt{n})$.
\end{lemma}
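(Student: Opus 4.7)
The plan is to combine the per-iteration separator size guarantee of $\kpr$ from \Cref{lm:KPR} with the total-weight control of Invariant~3 from \Cref{lm:invariant}, and then aggregate over the at most $k = 20h^2$ iterations that \Cref{alg:full} performs before halting. The two ingredients feed into each other cleanly: KPR bounds $|S_t|$ in terms of the current total weight $W_t = \sum_{v\in V} w_t(v)$, and Invariant~3 bounds how large $W_t$ can become.

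First, I would apply \Cref{lm:KPR} at each iteration $t$ to obtain $|S_t| \leq h\cdot W_t / \Delta$. Substituting the algorithm's choice $\Delta = \lfloor \sqrt{n}/(6h^2)\rfloor$ — which is at least $\sqrt{n}/(12h^2)$ for $n$ sufficiently large — this simplifies to $|S_t| = O(h^3\, W_t/\sqrt{n})$. Next, I would invoke Invariant~3 of \Cref{lm:invariant}, which gives $W_t \leq (40 + (k^3+1)(t-1))\,n = O(k^4\,n) = O(h^8\,n)$ for every $t\leq k = 20h^2$. Plugging this back yields the uniform per-iteration bound $|S_t| = O(h^{11}\sqrt{n})$.

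Finally, since the algorithm accumulates $S = \bigcup_{t\leq k} S_t$, combining the per-iteration bounds over the at most $k = O(h^2)$ iterations gives the claimed estimate on $|S|$. I do not expect any substantive obstacle: the proof is a direct concatenation of the guarantees already at our disposal, and the only care needed is to track the exponents of $h$ through the parameter choices $k = 20h^2$ and $\Delta = \Theta(\sqrt{n}/h^2)$ so as to land exactly on $O(h^{11}\sqrt{n})$. If the naive union over iterations introduces an extra factor of $k$, a slightly sharper amortization using the telescoping bound $W_{t+1}-W_t \leq (k^3+1)n$ from the proof of Invariant~3 (as opposed to the worst-case $W_t = O(k^4 n)$) can be used to absorb it.
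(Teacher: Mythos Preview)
Your approach is essentially identical to the paper's: bound each $|S_t|$ via \Cref{lm:KPR} in terms of $W_t$, plug in Invariant~3 to get $|S_t| = O(h^{11}\sqrt{n})$, and then pass to the union $S = \bigcup_t S_t$. The paper's own proof in fact stops at the per-iteration bound $|S_t| = O(h^{11}\sqrt{n})$ and writes ``as claimed,'' so you have actually been more careful than the paper in flagging the aggregation step.

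One correction, though: your proposed telescoping fix does not recover the missing factor of $k$. The quantity you need to control is $\sum_{t=1}^{k} W_t$, not $\sum_{t}(W_{t+1}-W_t)$, and $\sum_{t=1}^{k} W_t = \Theta(k^5 n)$ whether you use the per-step increment $(k^3+1)n$ or the worst-case $W_t = O(k^4 n)$. So the honest bound from this argument is $|S| = \sum_t |S_t| = O(h^{13}\sqrt{n})$, not $O(h^{11}\sqrt{n})$. This is immaterial for the paper's purposes (it explicitly disclaims optimizing the exponent of $h$, and the final separator in \Cref{thm:main} is $O(h^{13}\sqrt{n})$ anyway), but you should not claim that the amortization sharpens the bound.
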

\begin{proof}
Let $W_t = \sum_{v\in V}w_t(v)$. By \Cref{lm:KPR}, we have:
\begin{equation}
\begin{split}
    |S_t| &=  O(h W_t/(\sqrt{n}/h^2)) \qquad \text{(since $\Delta = \floor{\sqrt{n}/(6h^2)}$)}\\
    &=  O(h^3 k^3 tn/\sqrt{n}) \qquad \text{(by Invariant~\ref{it:3} in \Cref{lm:invariant})}\\
    &  \leq O(h^3 k^4 n/\sqrt{n})  \qquad \text{(since $t\leq k$})\\
    &= O(h^{11}\sqrt{n}), \qquad \text{(since $k = O(h^2)$)}
\end{split}
\end{equation}
as claimed.
\end{proof}

Finally, we show that 

\begin{lemma}\label{lm:connector} Suppose that the algorithm does not return a separator $S$ after $k$ iterations, then $\mathcal{T} \coloneqq\{T_1,\ldots, T_k\}$ is a~stochastic connector.
\end{lemma}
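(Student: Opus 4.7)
The plan is to verify the two defining properties of a stochastic connector (Definition \ref{def:stoc-conn}) by appealing directly to the invariants established in Lemma \ref{lm:invariant}, which already encode essentially the right bounds in terms of the weight function $w_{k+1}$. The hypothesis that the algorithm does not return after $k$ iterations means that Lemma \ref{lm:invariant} applies at every iteration $t \in [k]$, so all five invariants are available for all relevant indices.

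Property~1 is immediate: for every $t \in [k]$, Invariant~\ref{it:4} gives $|V(T_t)| \geq (1 - 1/(10k)) n$, which is exactly condition~1 of Definition \ref{def:stoc-conn}.

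For property~2, fix $i \neq j$ in $[k]$, and let $t = \max(i,j)$. A pair $P \sim \mathcal{R}(T_i)$, $Q \sim \mathcal{R}(T_j)$ intersects if and only if some vertex $v$ lies in both, so a union bound over $v$ gives
\begin{equation*}
\Pr_{P,Q}[P \cap Q \neq \emptyset] \;\leq\; \sum_{v \in V} \Pr[E_{i,v} \cap E_{j,v}].
\end{equation*}
By Invariant~\ref{it:2}, each summand is at most $w_{t+1}(v)/(10 k^6 n) \leq w_{k+1}(v)/(10 k^6 n)$ (weights are monotonically nondecreasing in $t$). Summing and invoking Invariant~\ref{it:3} with $t = k+1$ yields
\begin{equation*}
\sum_{v \in V} \Pr[E_{i,v} \cap E_{j,v}] \;\leq\; \frac{W_{k+1}}{10 k^6 n} \;\leq\; \frac{(40 + (k^3+1) k) n}{10 k^6 n} \;=\; \frac{40 + k^4 + k}{10 k^6}.
\end{equation*}

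The only step that requires any care is checking that the constants in the invariants cleanly yield the $1/(5k^2)$ bound required by Definition \ref{def:stoc-conn}. Since $k = 20 h^2 \geq 20$, we have $40 + k \leq k^4$, hence $40 + k^4 + k \leq 2 k^4$, and the right-hand side above is at most $2 k^4 / (10 k^6) = 1/(5 k^2)$, as needed. This completes both properties, so $\mathcal{T}$ is a stochastic connector. I do not anticipate any real obstacle: the reweighting rule in \Cref{alg:full} was engineered precisely so that Invariants~\ref{it:2} and~\ref{it:3} combine multiplicatively to give the collision bound with room to spare, and the truncation radius $\sqrt{n}$ together with the weak-diameter guarantee of KPR is what makes Invariant~\ref{it:4} match condition~1 on the nose.
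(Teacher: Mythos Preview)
Your proof is correct and follows essentially the same route as the paper: Invariant~\ref{it:4} gives condition~1, and for condition~2 you union-bound over vertices, apply Invariant~\ref{it:2}, and then sum using Invariant~\ref{it:3} to get the $1/(5k^2)$ bound. The only cosmetic difference is that you pass through $w_{k+1}$ via monotonicity while the paper uses $w_{t+1}$ with $t=\max(i,j)$ directly; the arithmetic is identical.
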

\begin{proof}
We verify all the properties of $\mathcal{T}$ in \Cref{def:stoc-conn}. Observe that the size bound follows directly from Invariant~\ref{it:4} in \Cref{lm:invariant}. It remains to bound the collision probability of two random paths by $1/(5k^2)$. Consider two trees $T_i$ and $T_t$ where $i < t$ formed in iteration $i$ and $t$, respectively.  By Invariant~\ref{it:3}, we have:
\begin{equation}\label{eq:total-weight}
    \sum_{v\in V}w_{t+1}(v) \leq (40+(k^3+1)t)n \leq 2k^4 n
\end{equation}
since $t\leq k$ and $k^4 = 20^4 h^8 \geq 40 + k$ as $h\geq 1$.  By taking the union bound  from the probability in Invariant~\ref{it:2}, we have:

\begin{equation*}
       \Pr_{P\sim \mathcal{R}(T_i), Q\sim \mathcal{R}(T_t)}[P\cap Q\not=\emptyset]  \leq  \sum_{v\in V} \frac{w_{t+1}(v)}{10k^6n}  \stackrel{\text{\cref{eq:total-weight}}}{\leq}  \frac{2k^4n}{10k^6n}= \frac{1}{5k^2},
\end{equation*}
as desired. 
\end{proof}

\subsection{Running time}

We first bound the running time of (vertex-weighted) BFS.

\begin{lemma}\label{lm:node-BFS}
    Let $\angbracket{G = (V,E),w}$ be a connected vertex-weighted graph with $m$ edges. Suppose that the vertex weights are integers. Let $W = \sum_{v\in V}w(v)$. In $O(m+W)$ time, we can compute a node-weighted shortest-path tree rooted at any vertex $c\in V$. 
\end{lemma}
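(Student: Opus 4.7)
The plan is to reduce the vertex-weighted BFS problem to standard unit-weight BFS on a slightly blown-up unweighted graph. Specifically, I would construct an auxiliary directed graph $G'=(V',E')$ as follows. For each vertex $v\in V$, introduce a directed path gadget $v^{(1)}\to v^{(2)}\to\cdots\to v^{(w(v))}$ consisting of $w(v)-1$ unit-weight arcs; call $v^{(1)}$ the \emph{entry} and $v^{(w(v))}$ the \emph{exit} of the gadget. For each undirected edge $\{u,v\}\in E$, insert two unit-weight arcs $u^{(w(u))}\to v^{(1)}$ and $v^{(w(v))}\to u^{(1)}$, forcing every traversal to enter at an entry and leave at an exit. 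Then $|V'|=W$ and $|E'|=\sum_{v}(w(v)-1)+2m\leq W+2m=O(m+W)$. The construction itself clearly takes $O(m+W)$ time.

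Next I would verify the distance correspondence: for any directed walk in $G'$ from $c^{(1)}$ to $v^{(w(v))}$ that visits the gadgets of vertices $c=u_0,u_1,\ldots,u_k=v$ (in this order), one must spend $w(u_0)-1$ arcs inside the initial gadget, $w(u_i)-1$ arcs inside each intermediate gadget ($1\le i\le k-1$), $w(u_k)-1$ arcs inside the final gadget, and exactly $k$ transition arcs, for a total of $\sum_{i=0}^{k}w(u_i)-1$. Hence $d_{G'}(c^{(1)},v^{(w(v))})$ equals the length of the shortest $c$-to-$v$ path in $\angbracket{G,w}$ minus $1$, since positive integer weights make shortest walks simple and the directed structure rules out shortcuts that skip gadget traversals.

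Now I would run standard (unit-weight) BFS from $c^{(1)}$ in $G'$, producing a BFS tree $T'$ in $O(|V'|+|E'|)=O(m+W)$ time. To extract a node-weighted shortest-path tree $T$ of $\angbracket{G,w}$ rooted at $c$, for each $v\neq c$ I look up the BFS predecessor of $v^{(1)}$ in $T'$; by construction, this predecessor has the form $u^{(w(u))}$ for a unique $u\in V$ adjacent to $v$, and I set $u$ to be the parent of $v$ in $T$. Correctness of $T$ as a node-weighted shortest-path tree follows directly from BFS correctness on $G'$ together with the distance correspondence above; this postprocessing adds only $O(n)$ time.

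The main (very mild) obstacle is just verifying the distance correspondence, i.e.\ arguing that no alternative route in $G'$ can shortcut the gadget-traversal cost. This is precisely where the directed vertex gadgets (rather than an undirected split) are needed, since in an undirected version one could enter a gadget at its entry and immediately leave through a reverse transition arc without paying the vertex weight. With the directed construction the traversal cost is enforced edge-by-edge, and the rest of the argument is routine.
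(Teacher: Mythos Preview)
Your proposal is correct and essentially identical to the paper's proof: both reduce to BFS on a directed blow-up where each vertex is replaced by a directed path of length proportional to its weight, with transition arcs between exits and entries. The only cosmetic difference is that the paper gives transition arcs weight~$0$ (yielding a $\{0,1\}$-weighted BFS instance) while you give them weight~$1$ and compensate by using $w(v)-1$ internal arcs, which has the minor advantage of invoking plain unit-weight BFS directly.
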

\begin{proof}
    First, we apply the textbook reduction (for maxflow with vertex capacities) to reduce to computing the shortest-path tree in a~directed graph with edge weights in  $\{0,1\}$: split every vertex $v$ into $v_{in}$ and $v_{out}$, add a~directed path from  $v_{in}$ to $v_{out}$ of total length $w(v)$ (so each edge has length 1), and for every edge $uv \in E(G)$, we add two arcs $(v_{out}\rightarrow u_{in})$ and $(u_{out}\rightarrow v_{in})$ of length $0$ each.  Let $\hat{G}$ be the resulting graph. Observe that $\hat{G}$ has $O(m + W)$ directed edges. Now we can find a shortest-path tree of $\hat{G}$ from $c_{in}$ by doing a~breadth-first search in time $O(m+W)$. Then in the same time, we can convert the BFS tree of $\hat{G}$ into a node-weighted shortest-path tree rooted at $c\in V$. 
\end{proof}

Now we bound the running time of~\Cref{alg:full}.

\begin{lemma}\label{lm:time} The running time of~\Cref{alg:full}, excluding the running time of \textsc{FindMinor} (the last line), is $O(h^3m + h^{11}n)$
\end{lemma}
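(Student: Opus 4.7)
The plan is to bound the work done per iteration and then multiply by the total number of iterations $k=20h^2$. The pre-loop check ``if $|E|\ge 100h^2 n$'' only reads $100h^2 n$ edges and runs in $O(h^2 n)$, which is absorbed into the claimed bound, so after that step we may assume $m=O(h^2 n)$ for what follows (though the lemma is stated in terms of $m$, which is fine).

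For a single iteration $t$, the nontrivial work comes from three places: the KPR call, the truncated BFS, and the reweighting loop. By \Cref{lm:KPR}, the call $\kpr(\langle G,w_t\rangle, \lfloor\sqrt{n}/6h^2\rfloor, h)$ costs $O(h(m+W_t))$ where $W_t=\sum_v w_t(v)$. By \Cref{lm:node-BFS}, the call $\bfs(\langle G,w_t\rangle, c_t, \sqrt n)$ costs $O(m+W_t)$ (we can feed the whole graph; the internal graph after the standard split has $O(m+W_t)$ arcs). Finally, the reweighting loop walks $V(T_t)\subseteq V$ once and performs $O(1)$ arithmetic per vertex, so it is $O(n)$. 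The dominant term is the KPR call, giving per-iteration cost $O(h(m+W_t)+n)=O(hm+hW_t)$.

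The next step is to bound $W_t$ uniformly. Invariant~\ref{it:3} in \Cref{lm:invariant} gives $W_t\le(40+(k^3+1)(t-1))n$, and since $t\le k=20h^2$, this yields $W_t=O(k^4 n)=O(h^8 n)$. Plugging this in, each iteration costs $O(hm+h\cdot h^8 n)=O(hm+h^9 n)$.

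Summing over at most $k=20h^2$ iterations gives $O(k(hm+h^9 n))=O(h^3 m + h^{11}n)$, matching the claim. There is no real obstacle: the only subtlety is making sure to cite Invariant~\ref{it:3} to bound $W_t$ across iterations rather than trying to bound it per iteration from scratch, and to observe that if the algorithm returns early (either a minor model from KPR or a separator from the balance test) the total running time is only smaller than this bound, so the worst case is indeed the full $k$ iterations.
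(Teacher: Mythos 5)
Your proof is correct and takes essentially the same route as the paper: bound each iteration's cost by the KPR call $O(h(m+W_t))$ via \Cref{lm:KPR} and \Cref{lm:node-BFS}, bound $W_t = O(k^4 n)$ via Invariant~\ref{it:3}, and sum over $k = 20h^2$ iterations to get $O(h^3 m + h^{11} n)$. The paper's proof is nearly identical, merely writing $h\sum_t W_t = O(hk^5 n)$ directly rather than bounding $W_t$ uniformly first and multiplying by $k$.
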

\begin{proof} 
Let $W_t = \sum_{t}w_t(v)$.  By \Cref{lm:KPR}, the running time of the KPR decomposition in iteration $t$ is  $O(h(m+W_t))$. The running time to compute the (vertex-weighted) BFS is $O(m+W_t)$ by \Cref{lm:node-BFS}. Thus, the total running time is:
\begin{equation}
    \sum_{t=1}^{k}O(h(m + W_t)) = O(hkm + h\sum_{t=1}^k W_t) = O(hkm + h k^5 n) = O(h^3m + h^{11}n), 
\end{equation}
as desired.
\end{proof}

\section{Reduction to Sparse Graphs}\label{sec:sparse-reduction}

The goal of this section is to show \Cref{lem:minors-in-dense}, restated below, in order to avoid any dependence on the number $m$ of edges of~\Cref{thm:main}.
\SparseReduction*
%\begin{lemma}\label{lem:minors-in-dense}
%There is a~deterministic algorithm that, given any $n$-vertex graph $G$ with at least $100 h^2 n$ edges, outputs a~$K_h$-minor model of~$G$ in $O(h^6 n)$ time.
%\end{lemma}

To that end, we could combine the results in \cite{DHJRW13} and \cite{AlonKS23}.
Dujmović et al.~\cite{DHJRW13} describe an algorithm that finds in $O(\poly(h)n + 2^{\Tilde{O}(h^2)})$ time a~$K_h$-minor model in any $n$-vertex graph with average degree at least $10 h \sqrt{\log h}$.
Apart from the final brute-force search on an~$O(h^2 \log h)$-vertex graph, their algorithm runs in time $O(\poly(h)n)$.
As this graph still has large enough average degree, one can thus substitute this last step with an effective version of the short proof of Alon et al.~\cite{AlonKS23} of the Kostochka--Thomason bound.
One can indeed easily derive a~randomized polynomial-time algorithm from their proof.

Instead, we will give a~simple proof with a~worst (quadratic) bound on the average degree.
This way our algorithm is deterministic and self-contained.
The first step is to efficiently find, in any graph of average degree~$2d$, a~minor with minimum degree at~least~$d$ and only $\Theta(d)$ vertices.

We denote the minimum degree of~$G$ by $\delta(G)$.
Given a~graph $H$ with vertex set $\{v_1, \ldots, v_h\}$, an~\EMPH{$H$-minor model} of a graph $G$ is a collection of $h$ vertex-disjoint connected subgraphs $\{C_1,C_2,\ldots, C_h\}$ of $G$ such that for every two subgraphs $C_i,C_j$ when $(v_i,v_j) \in E(H)$, there exists an edge $uv \in E(G)$ with $u\in V(C_i)$ and $v\in V(C_j)$.

\begin{lemma}\label{lem:densifier}
For any positive integer $d$, there is an algorithm that on any $n$-vertex graph $G$ with $|E(G)|=dn$ outputs an $H$-minor model of $G$ with $|V(H)| \leq 2d$ and $\delta(H) \geq d$, and runs in time $O(d^3 n)$.
\end{lemma}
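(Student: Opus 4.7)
The plan is to prove~\Cref{lem:densifier} via a two-phase procedure. In Phase~1 we repeatedly delete any vertex of degree strictly less than $d$; the crucial invariant is that this never decreases the edge-to-vertex ratio $|E|/|V|$, since removing a vertex of degree $\delta \le d-1$ decreases $|E|$ by $\delta$ and $|V|$ by~$1$, so starting from $|E(G)|=dn$ the property $|E|\ge d|V|$ is maintained throughout. Hence the resulting subgraph $G_1$ satisfies $\delta(G_1)\ge d$. Using a bucket queue indexed by current degree, Phase~1 runs in $O(m)=O(dn)$ time. If $|V(G_1)|\le 2d$ already, we return $G_1$ with singleton branch sets and halt.

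Phase~2 iteratively contracts an edge of the current minor to reduce the vertex count to at most $2d$ while preserving $\delta\ge d$. The core combinatorial claim is: whenever the current minor $H$ has $\delta(H)\ge d$ and $|V(H)|>2d$, there exists an edge $xy\in E(H)$ whose contraction (with parallel edges collapsed) keeps $\delta\ge d$. Concretely, such a \emph{contractible} edge is one satisfying (i) $|(N(x)\cup N(y))\setminus\{x,y\}|\ge d$, so the merged vertex retains degree $\ge d$, and (ii) no common neighbor $z\in N(x)\cap N(y)$ has $\deg_H(z)=d$, so that collapsing the two parallel edges from $z$ to $x$ and $y$ does not pull $z$ below the minimum-degree threshold. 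To prove existence, the plan is to argue by contradiction: if every edge fails (i) or (ii), then a double-counting argument on the minimum-degree vertices and their neighborhoods, combined with the density bound $|E(H)|\ge d|V(H)|/2$, will force $|V(H)|\le 2d$. Establishing this structural dichotomy is what I expect to be the main technical obstacle of the proof.

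Algorithmically, we represent the evolving minor by adjacency lists augmented with hash tables so that adjacency and common-neighbor queries run in expected $O(1)$ and $O(d)$ time, respectively. After each contraction only the local adjacency structure around the merged vertex needs updating (touching $O(d^2)$ incidences, since all degrees remain $O(d)$ under our invariants); a local sweep of the Phase~1 rule then removes any vertex whose degree dropped below $d$. Each iteration of Phase~2 strictly decreases $|V(H)|$, so the loop terminates after $O(n)$ iterations. An amortized analysis that bounds per-iteration work by $O(d^3)$ (checking contractibility of the $O(d^2)$ edges local to the contraction, each at $O(d)$ cost, plus updating the hash-based adjacency structures) yields the claimed $O(d^3n)$ total running time and, once $|V(H)|\le 2d$, outputs the desired $H$-minor model.
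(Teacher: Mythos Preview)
Your core combinatorial claim is false. Consider the circulant graph $H=C_9(1,2)$ (vertices $\mathbb{Z}_9$, with $i\sim j$ iff $|i-j|\bmod 9\in\{1,2\}$): it is $4$-regular, so with $d=4$ we have $\delta(H)=d$ and $|V(H)|=9>2d$. Every edge lies in a triangle (e.g.\ $\{i,i{+}1\}$ has common neighbour $i{+}2$, and $\{i,i{+}2\}$ has common neighbour $i{+}1$), and every vertex has degree exactly~$d$, so condition~(ii) fails for \emph{every} edge. More generally, $C_n(1,\dots,d/2)$ with $n>2d$ gives the same obstruction for any even~$d$. Hence the double-counting argument you are hoping for cannot exist. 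The ``local sweep'' does not rescue the situation either: contracting any edge of $C_9(1,2)$ and then iteratively deleting vertices of degree $<4$ collapses the entire graph to nothing.

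The paper sidesteps this by \emph{not} insisting on always finding a contractible edge. It explicitly trims the maximum degree to at most $2d$ (by deleting surplus edges), then looks at a minimum-degree vertex $u$, so $d\le\deg(u)\le 2d$. If some neighbour $v$ has $|N(u)\cap N(v)|<d$, contracting $uv$ is safe; otherwise \emph{every} neighbour of $u$ already has $\ge d$ neighbours inside $N(u)$, and the algorithm simply outputs the induced subgraph $G'[N(u)]$, which has at most $2d$ vertices and minimum degree $\ge d$. This ``take the neighbourhood and stop'' exit is exactly the missing idea in your plan; the circulant examples above are precisely the graphs where one must use it.

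A second gap: you assert that ``all degrees remain $O(d)$'' during Phase~2, but nothing in your procedure enforces this. After Phase~1 the maximum degree is unbounded, and contractions can only raise the degree of the merged vertex; without an explicit edge-pruning step your $O(d^3)$-per-iteration accounting is unjustified. Finally, the hash-table adjacency structure yields only expected running time, whereas the lemma (and the paper's proof, which uses bucket arrays and doubly linked lists) is deterministic.
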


\begin{proof}
The algorithm goes through a~strictly decreasing chain of graphs for the minor relation. 
In particular, it builds at~most~$|E(G)|+|V(G)|=(d+1)n$ (distinct) graphs.
In addition to the current graph $G'$, it maintains 
\begin{itemize} 
\item an array $t$ of length~$n$ such that $t[i]$ contains a~doubly linked list of vertices with degree exactly $i$ in $G'$, 
\item an array $s$ of length $n$ such that for any $v \in V(G)$, $s[v]$ is a~pointer to the occurrence of $v$ in $t$, or $\bot$ if $v \notin V(G')$, 
\item $\delta := \delta(G')$ the minimum degree of~$G'$, and
\item an array $f$ of length $n$ such that for any $v \in V(G')$, $f[v]$ contains the list of original vertices of~$G$ contracted into~$v$.
\end{itemize}
From a~representation of~$G$ by adjacency lists, $s$, $t$, $f$ and $\delta$ can be initialized (for $G'=G$) in time $O(dn)$.
For $G'$, we use a~data structure that supports edge deletion and edge addition in $O(1)$ time, and deletion of vertex $v \in V(G')$ in time $O(d_{G'}(v))$.
We define the positive measure $\mu(G') := 2|E(G')|+|V(G')|+(n-\delta(G')) \leq (2d+2)n$.

We iteratively apply the following steps:
%Loop as follows
\begin{enumerate}
    \item Let $u$  be the first vertex in $t[\delta]$.
\item If $\delta < d$, simply delete $u$ from $G'$,
\item else if $\delta > 2d$, delete all but $2d$ edges incident to~$u$ (arbitrarily),
\item else if there is an edge $uv \in E(G')$ such that $|N_{G'}(u) \cap N_{G'}(v)| < d$, then contract $uv$, by deleting $u$ and adding the edges $vw$ for every $w \in N_{G'}(u) \setminus N_{G'}(v)$,
\item else break the loop and output $H := G'[N_{G'}(u)], f_{|N_{G'}(u)}$ (where $f_{|Z}$ is the restriction of $f$ to~$Z$).
\end{enumerate}

If second item holds, it takes $O(d_{G'}(u))$ time to delete $u$ from $G'$ and update $s$ and $t$, and $O(\delta(G')-\delta(G'-\{u\})+1)$ time to update $\delta$.
Meanwhile, $\mu$ drops by $2d_{G'}(u)+1+\delta(G')-\delta(G'-\{u\})$.

If the third item holds, it takes $O(\delta-2d)$ time to delete the edges from $G'$ and update $s$ and $t$, and $O(1)$ time to set the new $\delta$.
The measure $\mu$ decreases by $2(\delta-2d)-(\delta-2d)=\delta-2d$.

Checking if the 4th item holds takes $O(d^2)$ time since after the third step, $u$ has at most~$2d$ neighbors in~$G'$, whereas in $O(d)$ time the contraction of $uv$ can be emulated by deleting $u$ and adding fewer than $d$ edges incident to~$v$.
Note that the minimum degree cannot decrease by more than one, and the new graph has one less vertex and at least one less edge.
Therefore, $\mu$ drops by at~least $2+1-1=2$.
Again, we update $s, t, f, \delta$ in $O(d)$ time.

The most time invested per one unit drop of $\mu$ is $O(d^2)$, in the third item; hence the total running time $O(d^3 n)$.
We should now check that the algorithm indeed reaches the fourth item, and that the output graph $H$ has the required properties.

Note that initially (in $G$) the average degree is $2d$, and each operation of the first three items cannot produce a~graph with average degree less than $d$ from a~graph with average degree at~least~$d$.
In particular, as $|E(G')|+|V(G')|$ strictly decreases and $G'$ cannot have less than $d+1$ vertices (as otherwise its average degree would be smaller than $d$), the algorithm eventually returns $H := G'[N_{G'}(u)]$ with $d \leq d_{G'}(u) \leq 2d$ and $|N_{G'}(u) \cap N_{G'}(v)| \geq d$ for every $v \in N_{G'}(u)$.
Therefore, $H$ has at most $2d$ vertices and minimum degree at~least~$d$.
\end{proof}

We now greedily build, in the minor given by the previous lemma, a~$(\leq 2)$-subdivision of an appropriately large clique.
If this fails, we obtain a~subgraph on $\Theta(d)$ vertices in which every pair of vertices shares $\Theta(d)$ neighbors.  

\begin{lemma}\label{lem:2-subd-or-denser}
There is an algorithm that given any graph $H$ with at~most~$2d$ vertices and minimum degree at~least~$d$, outputs:
\begin{itemize}
    \item a~$K_{\lfloor \sqrt{d}/10 \rfloor}$-minor model of~$H$ or
    \item   a~subgraph $H'$ of $H$ with at~most~$1.02 d$ vertices and minimum degree at~least~$0.94d$,
\end{itemize}
and runs in $O(d^3)$ time.
\end{lemma}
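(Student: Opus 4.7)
The plan is to set $t:=\lfloor\sqrt{d}/10\rfloor$ and handle two regimes. If $|V(H)|\leq 1.02d$, I simply return $H$ itself as $H'$: it is a subgraph of itself, has at~most $1.02d$ vertices by assumption, and has minimum degree $\geq d\geq 0.94d$. Otherwise $|V(H)|\in(1.02d,2d]$, and I will produce a $K_t$-minor model by greedily constructing a $(\leq 2)$-subdivision of $K_t$ inside~$H$, using the failure mode of the greedy to switch to a denser subinstance on which the construction is guaranteed to finish.

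First greedy pass: maintain a used set $U\subseteq V(H)$ of already-chosen principals and subdivision vertices. At step $i+1$, look for some $v\in V(H)\setminus U$ such that for every earlier principal $v_j$, either $vv_j\in E(H)$ or $v$ and $v_j$ admit a common neighbor outside~$U$; if so, set $v_{i+1}:=v$, allocate a fresh subdivision vertex to each non-edge pair, and update~$U$. Throughout, $|U|\leq t+\binom{t}{2}\leq t^2\leq d/100$. If the pass completes all $t$ steps I output the resulting subdivided $K_t$, which immediately yields a $K_t$-minor model.

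If the greedy halts at some step $i+1\leq t$, every $v\in V(H)\setminus U$ is \emph{blocked} by some earlier principal $v_j$: $v\not\sim v_j$ and $N_H(v)\cap N_H(v_j)\subseteq U$. Since $|V(H)\setminus U|>1.02d-d/100>1.01d$, pigeonhole over the $i\leq t-1$ previous principals produces a single~$v_j$ whose blocked set $B$ satisfies $|B|\geq 1.01d/(t-1)\geq 10\sqrt{d}$. Setting $W:=V(H)\setminus N_H[v_j]$, one has $B\subseteq W$ and $|W|\leq d-1$. For every $v\in B$ we have $|N_H(v)\cap N_H(v_j)|\leq |U|\leq d/100$, which funnels almost all of $v$'s $\geq d$ neighbors into~$W$, giving $\deg_{H[W]}(v)\geq 0.99d$. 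Inclusion-exclusion inside~$W$ then forces $|N_H(u)\cap N_H(v)\cap W|\geq 2(0.99d)-(d-1)=0.98d+1$ for every pair $u,v\in B$. I then restart the greedy inside $H[W]$: pick any $t$ distinct principals $v_1',\ldots,v_t'\in B$ (possible since $|B|\geq 10\sqrt{d}\geq t$), and for each non-edge pair $(v_p',v_q')$ greedily take a fresh subdivision vertex from $N_H(v_p')\cap N_H(v_q')\cap W$; since each such common neighborhood has $\geq 0.98d$ elements and the total number of already-chosen vertices is $\leq t+\binom{t}{2}\leq d/100$, at~least $0.97d$ choices remain at every stage. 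This yields a $(\leq 2)$-subdivision of~$K_t$ inside $H[W]\subseteq H$, hence a $K_t$-minor model of~$H$.

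The main obstacle is the failure analysis: one has to show that a \emph{single} failing step of the first greedy automatically produces the structurally dense pair $(B,W)$ on which the re-run is guaranteed to finish. The key numerics are $|U|\leq d/100$ (so each blocked vertex pushes nearly all of its $\geq d$ neighbors into~$W$), $|W|\leq d-1$ (which forces huge pairwise common neighborhoods among $B$-vertices inside~$W$), and $|B|\geq 10\sqrt{d}\geq t$ (enough principals for the re-run). For running time, the first greedy's $t$ steps each test $O(d)$ candidates against $O(t)$ principals at $O(d)$ cost per test, totaling $O(t^2d^2)=O(d^3)$; locating the blocker costs $O(td)$ and the recovery $O(t^2d)=O(d^2)$, so the overall bound is the promised $O(d^3)$.
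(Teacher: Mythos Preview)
Your argument is essentially correct and takes a genuinely different route from the paper. The paper fixes an arbitrary set $X$ of $s=\lfloor\sqrt{d}/10\rfloor$ branch vertices \emph{in advance} and then, pair by pair, searches for an $x$--$y$ path of length at most~$3$ (so a $(\le 2)$-subdivision); when some pair $x,y$ fails, it exploits that $x$ and $y$ are at distance $>3$ to show that $N(x)\setminus X$ and $N(y)\setminus X$ are disjoint with no edges between them, and \emph{returns} $H':=\tilde H[N(x)\setminus X\cup\{x\}]$ as the dense subgraph. You instead split off the trivial case $|V(H)|\le 1.02d$ (where $H$ itself serves as~$H'$), and in the remaining case choose principals adaptively with only length-$2$ connections; on failure you use pigeonhole over the current principals to locate a dense piece $W=V(H)\setminus N_H[v_j]$ and \emph{finish the minor inside}~$W$ rather than output a subgraph. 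Thus your proof in effect merges the paper's Lemma~\ref{lem:2-subd-or-denser} and Lemma~\ref{lem:cm-super-dense} into a single self-contained construction, at the cost of the extra pigeonhole step. Both arguments hinge on the same phenomenon---a failed short connection forces a small, very dense neighbourhood---but reach it differently.

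There is one small gap in your first greedy. Your success criterion for a candidate $v$ is that \emph{each} non-edge pair $(v,v_j)$ individually admits a common neighbour outside~$U$; you then ``allocate a fresh subdivision vertex to each non-edge pair''. But existence of one common neighbour per pair does not guarantee \emph{distinct} choices across the up to $i$ pairs (all could point to the same vertex). The fix is routine: either strengthen the test to $|N(v)\cap N(v_j)\setminus U|\ge t$ for every~$j$, or simply attempt the allocation greedily and declare $v$ failed if it stalls. In the failure analysis you then get $|N(v)\cap N(v_j)|\le |U|+t\le d/100+\sqrt{d}/10\le d/50$, so $\deg_{H[W]}(v)\ge 0.98d$ and pairwise common neighbourhoods in~$W$ are still $\ge 0.96d$; the re-run has ample slack and the $O(d^3)$ running time is unaffected. (Minor terminological note: what you build is a $(\le 1)$-subdivision, not a $(\le 2)$-subdivision.)
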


\begin{proof}
Let $X \subseteq V(H)$ be an arbitrary subset of size $s : = \lfloor \sqrt{d}/10 \rfloor$.
We denote by $\tilde{H}$ the current graph, initialized by $\tilde{H} := H$.
We in fact try to build $K_s$ as a~topological minor whose branching vertices are the vertices of~$X$, and every edge is subdivided at~most~twice.
For each pair $x \neq y \in X$, if there is an $x$--$y$ path $P$ on at~most~3 edges in $\tilde{H}-(X \setminus \{x,y\})$, add the 0, 1, or 2 internal vertices of~$P$ to the topological minor and remove them from $\tilde{H}$.
Else, output $H' := \tilde{H}[(N_{\tilde{H}}(x) \setminus X) \cup \{x\}]$ and terminate.

First, note that if none of the $s \choose 2$ iterations terminates by outputting a~subgraph~$H'$, then we have eventually built a~$K_s$-minor model.
Checking if there is such a~path of length at~most~3 between $x$ and $y$, and if so, removing its internal vertices, can be done in $O(d^2)$ time.
We repeat this at~most~${s \choose 2}=O(d)$ times, hence the total running time $O(d^3)$.

We now show that if the algorithm outputs a~subgraph $H'$, it has the required properties.
Overall, as we remove at~most~$2{s \choose 2}$ vertices from $H$, the minimum degree in every built subgraph $\tilde{H}$ is at~least~$d-2{s \choose 2} \geq 0.99 d$.
Let $x \neq y \in X$ be the (first) pair such that the algorithm outputs a~subgraph~$H'$. 
Let $A := (N_{\tilde{H}}(x) \setminus X) \cup \{x\}$ and $B := (N_{\tilde{H}}(y) \setminus X) \cup \{y\}$.
In particular, $|A| > 0.99d - s > 0.98d$ and $|B| > 0.99d - s > 0.98d$.
As $x$ and $y$ are by assumption at distance greater than 3, $A$ and $B$ are disjoint and no vertex of $A$ has a~neighbor in $B$.
Therefore, every vertex of $A$ has at~least $0.98 d -0.04d=0.94d$ neighbors in $A$.
Indeed, $|V(\tilde{H})| \leq 2d$ so there are at~most~$2d-2 \cdot 0.98d=0.04d$ vertices in $V(\tilde{H}) \setminus (A \cup B)$. 
Finally, as $A \cap B = \emptyset$ and $|B| > 0.98d$, it holds that $|A| = |V(H')| \leq 1.02d$.
\end{proof}

In the very dense graphs given by the previous lemma, it is easy to find the 1-subdivision of a~large clique as a~subgraph.

\begin{lemma}\label{lem:cm-super-dense}
Given a~graph $H'$ with at~most~$1.02d$ vertices and minimum degree at~least~$0.94d$, a~$K_{\lfloor \sqrt{d}/10 \rfloor}$-minor model of~$H'$ can be found in time $O(d^2)$.
\end{lemma}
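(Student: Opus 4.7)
The plan is to build a $1$-subdivision of $K_s$, where $s = \lfloor \sqrt{d}/10 \rfloor$, as a subgraph of $H'$. Since $H'$ is so dense, any two vertices already share $\Omega(d)$ common neighbors, which gives us ample room to greedily assign a private subdivision vertex to each edge of the target clique.

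Concretely, I would first pick an arbitrary subset $X \subseteq V(H')$ with $|X| = s$; these will be the branch vertices of the minor model. By inclusion-exclusion,
\begin{equation*}
|N_{H'}(x) \cap N_{H'}(y)| \;\geq\; 2 \cdot 0.94 d - 1.02 d \;=\; 0.86 d
\end{equation*}
for every pair $x \neq y \in X$. Then I would process the $\binom{s}{2}$ pairs $\{x,y\} \subseteq X$ one by one, maintaining a set $U$ (initially empty) of vertices already chosen as subdividers. For the current pair, pick any $z \in (N_{H'}(x) \cap N_{H'}(y)) \setminus (X \cup U)$ and add it to $U$, using the edges $xz$ and $zy$ as the $x$-to-$y$ connection.

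This greedy choice always succeeds: throughout the procedure we have
\begin{equation*}
|X \setminus \{x,y\}| + |U| \;\leq\; s + \binom{s}{2} \;\leq\; \tfrac{\sqrt{d}}{10} + \tfrac{d}{200} \;<\; 0.86 d,
\end{equation*}
so the common neighborhood of $x$ and $y$ always has a fresh vertex available. The resulting subgraph is exactly the $1$-subdivision of $K_s$ on branch set $X$, and contracting each subdivider into one of its two endpoints produces a genuine $K_s$-minor model of $H'$.

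For the running time, note that $|V(H')| = O(d)$, so each lookup of a common neighbor of $x$ and $y$ avoiding $X \cup U$ takes $O(d)$ time (scan one of the two adjacency lists of size $O(d)$ and test each candidate against the other neighborhood and against $X \cup U$, the latter stored as a bit array on $V(H')$). There are $\binom{s}{2} = O(d)$ pairs, yielding the promised $O(d^2)$ bound. The only place that needs any care is verifying the numerical inequality that guarantees the greedy step never gets stuck, but that is just the single calculation above, so there is no real obstacle.
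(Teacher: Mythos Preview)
Your proposal is correct and follows essentially the same approach as the paper: both pick an arbitrary set $X$ of $s$ branch vertices and greedily build a $1$-subdivision of $K_s$ by assigning to each pair a fresh common neighbor outside $X$, using the density hypothesis to guarantee such a neighbor always exists. The only cosmetic difference is that the paper tracks the evolving minimum degree after deletions, whereas you bound the common neighborhood once via inclusion--exclusion and then subtract $|X|+|U|$; both counts comfortably clear the threshold.
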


\begin{proof}
Take an arbitrary set $X \subseteq V(H')$ of size $s := \lfloor \sqrt{d}/10 \rfloor$.
We find the 1-subdivision of $K_s$ as a~subgraph of~$H'$ with branching vertices in $X$.
For each pair $x \neq y \in X$, pick a~common neighbor of $x$ and $y$ outside $X$, add it to the topological minor, and remove it from the graph.

This can be done in time $O(d^2)$.
We just need to argue that every pair has a~common neighbor outside $X$ in the remaining graph.
In total, we remove $s \choose 2$ vertices.
Thus, at any point, the minimum degree is at~least~$0.94d - {s \choose 2} > 0.93d$.
Therefore, $x$ and $y$ have at~least $0.93d - s > 0.92d$ neighbors outside $X$.
As the total number of vertices is at~most~$1.02d$, this implies the existence of a~shared neighbor outside~$X$.
\end{proof}

We can now prove the main lemma of the section.

\begin{namedproof}[Proof of~\Cref{lem:minors-in-dense}]
We apply \Cref{lem:densifier} to the spanning subgraph of the input graph $G$ made by its first $100 h^2 n$ edges, and $d := 100 h^2$.
This yields an $H$-minor model of $G$ such that $H$ satisfies the preconditions of~\Cref{lem:2-subd-or-denser}.
The latter lemma either returns a~$K_h$-minor model (and we are done), or a~subgraph $H'$ of $H$ satisfying the preconditions of~\Cref{lem:cm-super-dense}.
In turn, this lemma yields a~$K_h$-minor model.
Unpacking the $K_h$-minor model in~$G$ is straightforward via the contraction array $f_{|V(H)}$.
The overall running time is $O(d^3 n + d^3 + d^2)=O(h^6 n)$.
\hfill$\square$\null
\end{namedproof}

\section{Conclusion}

In this paper, we gave the first algorithm that finds a balanced separator of size $O(\sqrt{n})$ in $O(n)$ time in any graph class excluding a~minor. If the smallest excluded minors have $h$ vertices (equivalently, $K_h$ is excluded), the multiplicative dependence on $h$ in both the separator size and the running time is polynomial, and more precisely, $h^{13}$. There are a few places where we can further optimize the exponent of this polynomial. For example,  in \Cref{lm:connector-to-minor}, we could apply the Lovász local lemma instead of the union bound to reduce the value of $k$ in \Cref{alg:full}.  
If randomness is allowed, we could use the probabilistic construction of \cite{AlonKS23}, which has a smaller dependence on~$h$.  However, these optimizations would complicate our construction. 

These optimizations, if implemented, are unlikely to give us running time, say, $O(h \polylog(h)\,n)$. For bounded-degree graphs, we can trade running time for separator size: we can construct a separator of size $O(h^{26}\sqrt{n})$ in $O(n)$ time. The idea is rather simple: we contract connected subgraphs of size $\Theta(h^{13})$ to get a new graph $H$ of size $O(n/h^{13})$, apply our separator algorithm in \Cref{thm:main} to get a balanced separator $S_H$, and then uncontract vertices in $S_{H}$ to get a balanced separator of $G$. The details are given in~\Cref{sec:app-linear}. 

Ideally, we would want an algorithm with running time $O(h \polylog(h)\,n)$ producing a separator of size $O(h\sqrt{n})$. Such an algorithm seems to require substantially more (complicated) ideas. We leave it as an open problem for future work.

\paragraph{Acknowledgement.} This research has been initiated/conducted at the AlgUW workshop (Bedlewo 09.2025), supported by the Excellence Initiative -- Research University (IDUB) funds of the University of Warsaw. H.L.\ is supported by NSF grant CCF-2517033 and NSF CAREER Award CCF-2237288.
É.B.\ has been supported by the French National Research Agency through the project TWIN-WIDTH with reference number ANR-21-CE48-0014.
T.K.\ is supported by the European Union under Marie Skłodowska-Curie Actions (MSCA), project no.\ 101206430 \mbox{\includegraphics[height=2em]{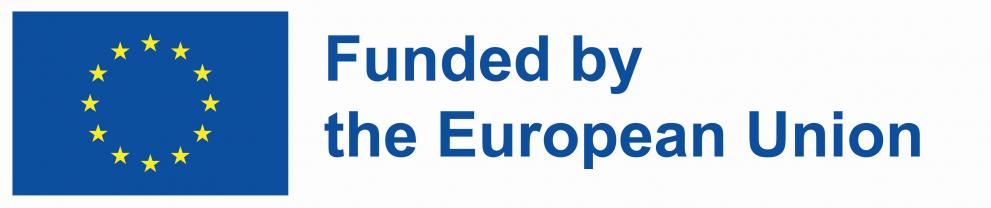}}, and by the VILLUM Foundation, Grant Number 54451, Basic Algorithms Research Copenhagen (BARC).
T.M.\ was supported by the Polish National Science Centre SONATA-17 grant number 2021/43/D/ST6/03312.
\bibliographystyle{alphaurl}
\bibliography{ref}

\appendix
\section{Proof of \Cref{lm:KPR}} \label{app:KPR-proof}

We restate \Cref{lm:KPR} here for convenience.

\VertexWeightKPR*

In this section, we will work with $\angbracket{G,w}$ and we set $W\coloneqq \sum_{v\in V(G)} w(v)$.
We first slightly modify the procedure $\bfs(\angbracket{G,w},v,r)$ defined in the \Cref{subsec:idea}.
For simplicity, we define \EMPH{$w$-BFS} as a shortcut for $\bfs(\angbracket{G,w},v,r)$ with $r=W$ and $v$ being an arbitrary vertex unless explicitly specified otherwise.
We also refer to \emph{levels} of the $w$-BFS, based on the distance to the root: vertex $v$ is in $\lfloor w(v) \rfloor$ levels from the floor of its distance down to the successor integer to its distance minus $w(v)$.
Observe that \Cref{lm:node-BFS} still applies in this setting.
Moreover, if a graph consists of multiple components, we can run \Cref{lm:node-BFS} component-wise, which yields the very same bound on the whole graph. 
Similarly, it is easy to observe that we can run $w$-BFS from a set of vertices, still with the same running time guarantees.

We first define the core objects of the KPR approach.
Given integers $\Delta,d,\ell>0$, we define a~\EMPH{$(\Delta,d,\ell)$-KPR-separator} $\Ss$ in a vertex-weighted graph $\langle G,w\rangle$ as a collection $\Ss= \{S_1, \ldots, S_\ell\}$ of at most $\ell$ vertex separators such that:
\begin{itemize}
  \item for all $v\in V(G) \setminus S_1$, $w(v)\le \Delta$,
  \item for all $i\in [\ell]$, $|S_i|\le \frac{W}{\Delta}$,
  \item the weak diameter of the largest connected component $C^*$ of $G-\bigcup \Ss$ is at most $d$.
\end{itemize}
It is readily seen that $(\Delta,h^2\Delta,h)$-KPR-separator matches the first outcome of \Cref{lm:KPR} with $S=\bigcup\Ss$.
We now define a complementary object that will serve as a certificate for the second outcome.
Given integers $\Delta,d,\ell>0$, we define a \EMPH{$(\Delta,d,\ell)$-KPR-decomposition} in a vertex-weighted graph $\langle G,w\rangle$ as a~collection $\Ss= \{S_1, \ldots, S_\ell\}$ of $\ell$ separators and $\ell$~rooted trees $\Tt\coloneqq\{T_1,\ldots,T_{\ell}\}$ such that
\begin{itemize}
  \item[(P1)] the largest connected component $C^*$ of $G-\bigcup \Ss$ has weak diameter greater than $d$,
  \item[(P2)] for all $v\in V(G) \setminus S_1$, $w(v)\le \Delta$,
  \item[(P3)] for all $i\in[\ell]$, $T_i$ is a spanning tree of a connected component in graph $G-\cup_{j=1}^{i-1} S_j$ containing $C^*$,
  \item[(P4)] all connected components of $G-\cup_{j=1}^{i} S_j$ are within $\Delta$ consecutive layers of $T_{i}$, and
  \item[(P5)] for all $i\in[\ell]$, the distance from any vertex of $C^*$ to the root of $T_i$ is at least $(h+1)\Delta+1$.
\end{itemize}

We remind the reader that the distances, diameter, etc., are considered in vertex-weighted graphs, where two vertices $u, v$ are at distance $d$ if the minimum of the sum of weights on any $u$--$v$ path is equal to~$d$.
The following lemma states that given a KPR-decomposition with sufficiently large $d$, we will be able to extract a~$K_{h,h}$ minor, hence a~$K_h$ minor, which corresponds to the second outcome of \Cref{lm:KPR}.

%COMPUTATIONS:
\newcommand{\SizeA}[0]{(h+1)\Delta}
\newcommand{\LsizeA}[0]{(h+1)\Delta + 1}
\newcommand{\AoAh}{3h^2\Delta} %(h-1)(2h\Delta+3\Delta)+\Delta+2 \le 3h^2\Delta
\newcommand{\AiAk}{2(h+2)\Delta+2} %max (2\SizeA, 2\LengthP}
\newcommand{\DeltaAiAk}{2h\Delta+3\Delta}%\AiAk+\Delta}
\newcommand{\Diam}{6h^2\Delta}%2\AoAh ----4h(h+3)\Delta
\newcommand{\Ldiam}{6h^2\Delta + 1}%2\AoAh---4h(h+3)\Delta
\newcommand{\LengthP}{(h+2)\Delta+1}%\LsizeA+\Delta%This is the upper bound on the length of the path. The lower bound is \LsizeA

    \begin{lemma}\label{lm:minor-existential}
      Let $h\ge 3$ and $\Delta>0$ be integers.
      If a graph admits a $(\Delta,\Diam,h)$-KPR-decomposition $(\Ss,\Tt)$ then it contains a~$K_{h,h}$ minor.
      Moreover, the minor model can be constructed in time $O(h(m+W))$.
    \end{lemma}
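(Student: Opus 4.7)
The plan is to adapt the classical Klein--Plotkin--Rao minor-extraction argument to the vertex-weighted setting. The object I will produce is bipartite: $2h$ pairwise vertex-disjoint connected branch sets $A_1,\dots,A_h,B_1,\dots,B_h$ together with a witnessing edge between every $A_i$ and $B_j$, which is exactly a $K_{h,h}$-minor model.

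First I would pick two ``anchor'' vertices in $C^{*}$. By (P1) there exist $u^{*},v^{*}\in C^{*}$ with $d_{\langle G,w\rangle}(u^{*},v^{*})>6h^{2}\Delta$. Since, by (P3), every tree $T_i$ spans a connected subgraph containing $C^{*}$, both $u^{*}$ and $v^{*}$ lie in every $T_i$, and by (P5) each root $r_i$ is at $T_i$-weighted distance at least $(h+1)\Delta+1$ from $u^{*}$ and from $v^{*}$. I would then discretize each spine $T_i[u^{*},r_i]$ and $T_i[v^{*},r_i]$ by walking up from the anchor and cutting each time the accumulated weight first exceeds a fresh multiple of $\Delta$. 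Property (P2) bounds every vertex weight by $\Delta$ outside $S_1$, so this produces on each spine a clean sequence of $h$ cut-vertices at $T_i$-weighted distances in $[k\Delta,2k\Delta]$ from the anchor, $k=1,\dots,h$.

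Next I would use (P4), which is the heart of KPR: every component of $G-\bigcup_{j\leq i}S_j$ occupies at most $\Delta$ consecutive layers of $T_i$. Consequently, the discretized segments of $T_i[u^{*},r_i]$ lying between consecutive cut-vertices cannot all be in the same component of $G-\bigcup_{j\leq i}S_j$; earlier trees $T_{j}$ with $j<i$ must therefore supply shortcut paths bridging those segments. From this I would construct the branch sets inductively in $i$: at stage $i$ I peel off the $i$-th pair $(A_i,B_i)$ as subtrees of $T_i$ hanging from $u^{*}$ and $v^{*}$, truncated at the $i$-th cut-vertex and pruned of every vertex already consumed by $A_1,\dots,A_{i-1},B_1,\dots,B_{i-1}$. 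The $A_i$--$B_j$ cross-edges are then harvested from the residual connectivity of $C^{*}$ inside $G-\bigcup_{j\leq h}S_j$, with the margins $(h+1)\Delta+1$ in (P5) and $6h^{2}\Delta$ in (P1) calibrated to absorb the $\leq\Delta$ loss at each of the $h$ peelings.

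The main obstacle will be disjointness bookkeeping: the spines $T_i[u^{*},r_i]$ all live in the same underlying graph, and the argument must ensure that the $2h$ branch sets built across different trees remain pairwise vertex-disjoint while each cross-pair still shares an edge; this slack is precisely what forces the diameter threshold $6h^{2}\Delta$ instead of a cleaner $h^{2}\Delta$ as in unweighted KPR. Algorithmically, each inductive step is a vertex-weighted BFS or tree traversal costing $O(m+W)$ by \Cref{lm:node-BFS}, so the full construction runs in $O(h(m+W))$ time, as required.
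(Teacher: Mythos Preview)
Your plan diverges from the paper's in a way that leaves a real gap. The paper does \emph{not} work with two anchors: it first extracts $h$ anchor vertices $a_1,\dots,a_h\in C^*$ at pairwise weak distance at least $2(h+2)\Delta+2$, found by walking along a long path in $C^*$ (this is where the $6h^2\Delta$ diameter bound is spent). The branch sets $A_i$ are then small neighbourhoods of the $a_i$'s, automatically disjoint because their centres are far apart. Each tree $T_j$ is used once, to build a single $B_j$: the subtree of $T_j$ spanned by the root $r_j$ together with the points $b_i^j$ on each spine $T_j[a_i,r_j]$ at $T_j$-distance just above $(h+1)\Delta$ from $a_i$. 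The adjacency $A_i$--$B_j$ is witnessed by an edge of that very spine, and the disjointness of the $B_j$'s across $j$ comes from (P4)$+$(P5).

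Your scheme instead tries to build all of $A_1,\dots,A_h$ near the single vertex $u^*$ (one per tree) and all of $B_1,\dots,B_h$ near $v^*$. Nothing in (P1)--(P5) forces the spines $T_1[u^*,r_1],\dots,T_h[u^*,r_h]$ to differ on their first $(h{+}1)\Delta$ layers; if the shortest-path trees agree locally near $u^*$, your raw $A_i$'s are nested, and pruning $A_1,\dots,A_{i-1}$ out of $A_i$ disconnects it from $u^*$ or empties it altogether. More seriously, you give no mechanism for cross-edges: an $A_i$ built in $T_i$ near $u^*$ and a $B_j$ built in $T_j$ near $v^*$ share no common spine, and ``residual connectivity of $C^*$'' supplies only a \emph{path} from $A_i$ to $B_j$, not an edge, a path that may moreover run through other branch sets. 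The paper's $h$-anchor layout is precisely what makes disjointness and all $h^2$ adjacencies fall out simultaneously, since each $A_i$--$B_j$ adjacency is realised inside the single tree $T_j$ along the spine from $a_i$ to $r_j$.
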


    \begin{proof}
      Let $C^*$ be the largest connected component, and by assumption (P1), it has weak diameter at least $\Ldiam$. 
      We show that there are at least $h$~vertices in $V(C^*)$ at weak distance at~least $\AiAk$ from each other.
      We find two vertices $a_1,a_h\in V(C^*)$ by running $w$-BFS from an arbitrary vertex (denote it as $a_1$) of $C^*$, and we pick $a_h$ as the furthest vertex within $C^*$ in the $w$-BFS tree.
      As the weak diameter is at least $\Ldiam$, then the weak distance of the $a_1$--$a_h$ path is at least $\AoAh$.
      As the weak diameter of $C^*$ is at~most its diameter and it is connected, there is an $a_1$--$a_h$ path $P$ in $C^*$ of length at least $\AoAh$.
      In particular, any $a_1$--$a_h$ path within $C^*$ has this property, and hence $P$ can be found by a $w$-BFS on the component $C^*$ rooted at $a_1$.
      
      Inductively, for all $1< i< h$, we set $a_i$ to be a vertex on $P$ such that its weak distance to any of $a_1,\ldots,a_{i-1}$ is at least $\AiAk$ and its weak distance to one of $a_1,\ldots,a_{i-1}$ is at most $\AiAk +\Delta$.
      Such a vertex exists as by (P2), the weight of each vertex in $G-S_1$ is at most $\Delta$.
      Moreover, such a vertex can be found by a~$w$-BFS rooted at $a_1,\ldots,a_{i-1}$.
      To avoid a contradiction with $a_1,a_h$ having weak distance at least $\AoAh$ we know that $a_i$ has a weak distance to $a_h$ at least 
      \begin{align} \AoAh- (i-1)(\DeltaAiAk).\label{eq:dist}\end{align}
      Computing \Cref{eq:dist} for $i=h-1$ checks that the weak distance between $a_{h-1}$ and $a_h$ is at least:
      \[
        \AoAh- (i-1)(\DeltaAiAk) \ge \AiAk.
        \]
      Hence, we found distinct vertices far away from each other: $a_1,\ldots,a_h$.
      For $i,j\in[h]$, we define branch sets $A_i^j$ such that  $A^h_i=\{a_i\}$ and $A^j_i\subseteq A^{j-1}_i$.
      We denote $A_i\coloneqq A_i^0$ and those represent the branch sets of one partite of $K_{h,h}$ minor.
      For $j\in[h]$, we create branch sets $B_j$ representing the vertices in the other partite. 

      Now, we proceed with an inductive argument for $1\le j \le  h$, starting with the step $j=h$.
      Suppose that $B_{j+1},\ldots B_h$ are already created as well as $A^{j+1}_i,\ldots A^{j+1}_i$ for all $i\in [h]$. 
      We take $T_j$ and set its root $r_j$ as the center of~$B_j$.
      For each $i\in[h]$, we consider path from $a_i$ to $r_j$ in $T_j$.
      We denote by $b^j_i$ the vertex whose distance in tree $T_j$ from $a_i$ is at least $\LsizeA$ but all closer vertices to $a_i$ have distance strictly smaller.
      Observe that $b^j_i$ has to exist by (P5).
      We denote $a_i,b^j_i$ path in $T_j$ as $P^j_i$.
      Observe that $b^j_i$ does not belong to $A_i^{j}$ which could contain path of length at most $\Delta+(h-j)\Delta \le\SizeA$ starting at $a_i$ by \Cref{cl:A-growth} (for $j=h$, $A_i^h$ contains one vertex of weight at most $\Delta$ by (P4).).
      Observe that for all $i,k$, $P^j_i$ and $P^j_k$ do not collide, as this would mean that $a_i$ and $a_k$ have weak diameter at most $2[\LengthP]-1$, but they have weak diameter at least $\AiAk$.
      Hence, we will define $B_j$ as a branch set consisting of a subtree of $B_j$ rooted at $r_j$ with leaves being $b^j_i$ for every $i\in [h]$.
      We now extend $A_i^{h-1}$ to contain vertices of $P^h_i\cap V(G-\cup_{k=1}^{j-1} S_k)$.
      By (P2) and (P4), we have:
      \begin{claim}\label{cl:A-growth}
      For all $i\in[h]$ and for all $h\ge j>1$, the maximum length of a path from $a_i$ in $A_i^{j-1}$ is at most that value in $A_i^{j}$ plus $\Delta$. 
    \end{claim}
    Indeed, by (P2) $P^h_i\cap V(G-\cup_{k=1}^{j-1} S_k)$ never spans more than $\Delta$ consecutive layers when the vertices of $S_j$ are returned back for $j>1$. 
    Observe that each iteration $j$ describing a branch sets construction in one step only consists of a single pass on a single tree $T_j$. 

      In this way, we obtain $K_{h,h}$-minor by contracting the branch sets above as well as maintaining $P^j_i-A_i$ as their mutually disjoint connectors.
      Hence, it remains to argue that for all $i,j\in[h]$, $A^j_i$ and $B_j$ are mutually disjoint and $P^j_i-A_i$ are non-empty.
      It is easy to observe that $B_j$ are mutually disjoint for all $j\in[h]$ by a combination of (P4) and (P5).
      Recall that the maximum length of a path from $a_i$ in $A_i$ is $\SizeA$ by $h$ applications of \Cref{cl:A-growth}.
      Therefore, it cannot reach $B_j$, which is at distance at least $\LsizeA$ from $a_i$.
      The same argument is true for $A_i$ and $A_k$ being disjoint, as the weak distance between their centers $a_i$ and $a_k$ is at least $2\SizeA$.
      It remains to show that for all $i,j\in[h]$, $P^j_i-A_i$ are disjoint and non-empty.
      The disjointness is guaranteed by the fact that the only overlap of $T_j$ happens in the last $\Delta$ layers by (P4), and this part is gradually added to form $A_i$.
      Hence, it remains to show the non-emptiness of $P^j_i-A_i$.
      It follows from the length of $P^j_i$, which is at least $\LsizeA$, which is larger than $\SizeA$.

      The running time is $O(h)$ runs of the $w$-BFS algorithm (\Cref{lm:node-BFS}).
    \end{proof}

    \begin{namedproof}[Proof of \Cref{lm:KPR}]
      If $h<3$, the problem is trivial. 
By \Cref{lm:node-BFS}, we construct a $w$-BFS tree starting from an arbitrary node $v$ of $G$ and obtain a partition into layers $\Ll_1$.
If there are at most $\frac{\Delta h^2}{2}$ layers in $\Ll_1$, we stop the procedure as the whole graph has diameter at most $\Delta h^2$.
In that case, $S\coloneqq \emptyset$ and $C^*$ is the single connected component found during the run of $w$-BFS tree.
Otherwise, we show how to construct in linear time a \mbox{$(\Delta,\Diam,h-1)$}-KPR-separator~$\Ss$ or a $(\Delta,\Diam,h-1)$-KPR-decomposition $(\Ss,\Tt)$.
In case the outcome is a $(\Delta,6h^2\Delta,h)$-KPR-decomposition by \Cref{lm:minor-existential}, we obtain $K_{h,h}$-minor which could be contracted to $K_{h+1}$ by contracting a matching of size $h-1$.
In case the outcome is a KPR-cut $\Ss$, we return $S :=\bigcup \Ss$ such that $C^*$ is of the weak diameter at most $\Diam$.

    Recall that there is more than $\frac{\Delta h^2}{2}$ layers in $\Ll_1$.
    Let $L_0$ contain the root of $\Ll_1$.
    We define $S_1\in \Ss$ as the cheapest out of the following possible separators $L_{1+i},L_{1+i+\Delta},L_{1+i+2\Delta}\ldots$, where $i\in[\Delta]$.
    % We define $S_1\in \Ss$ as the cheapest out of the following possible separators $L_{(h+1)\Delta+1+i},L_{(h+1)\Delta+1+i+\Delta+1},L_{(h+1)\Delta+1+i+2\Delta+1}\ldots$, where $i\in[\Delta]$.
    % Additionally, we cut any vertices of weight more than $\Delta$ to guarantee (P2).
Observe that as the total vertex weight is $W$, which corresponds to the total number of occurrences of the vertices in layers, the cheapest out of the possible proposed $\Delta$ separators has the size at most $\lfloor \frac{W}{\Delta}\rfloor$.
Moreover, if a vertex $v\in V(G)$ has $w(v)>\Delta$ then it is part of all the separators defined above and will never be part of $G-S_1$. 
As a result of this step, we are left with a set of connected components of $G-S_1$ denoted as $\Cc_1$.
Now, we repeat the same procedure for each component in $\Cc_1$ separately and recursively.
That means in step $j\le h+1$ we consider all connected components of graph $G-\bigcup_{k=1}^{j-1}S_k$. 
We run $w$-BFS for each separately.
If such a connected component has at most $\frac{\Diam}{2}$ levels, we do nothing as their weak diameter is even smaller.
If all connected components belong to this category, then we return $S := S_1 \cup \ldots \cup S_{j-1}$ as $(\Diam,h)$-KPR-separator. 
If a connected component has more levels, we apply the cutting procedure described above, i.e., choosing the smallest cut out of possible options $i\in[\Delta]$.
Then we set $S_j$ as the union of the smallest size cuts over all connected components considered in that step.
The analysis still holds as we take at most the average out of $\Delta$ disjoint options.

If we reach the step $j=h-1$, we have a set of connected components $\Cc_{h-1}$.
We check which component has the largest weighted size, and we denote it as $C^*$.
This can be easily checked by a component-wise $w$-BFS check in total time $m+W$.
If $C^*$ has weak diameter at most $\Diam$, we have the right KPR-separator and we return $\Ss$.
This can be checked by one iteration of $w$-BFS on the whole graph.
Hence, we are left with $C^*$ of weak diameter greater than $\Diam$.
We now verify that it is $(\Delta,\Diam,h)$-KPR-decomposition.
Properties (P1) and (P2) are straightforward.
For the other properties, we take the $w$-BFS trees that were built during the procedure for the connected component containing $C^*$ and we set those as $\Tt$.
Then Properties (P3) and (P4) follow by the construction of $\Tt$.
To check Property (P5), we recall that any $T\in \Tt$ has the depth at least $\frac{\Diam}{2}$ out of which at most $\Delta$ consecutive levels can be occupied by vertices in $C^*$.
Those levels of $T$ containing $C^*$ could not be as close to the root as $(h+1)\Delta+1$ as it would imply the whole component $C^*$ having weak diameter at most $2(h+1)\Delta+2<\Diam$.
Hence, Property (P5) is verified.

The running time matches the statement as we invoked at most $h$ times a~$w$-BFS algorithm component-wise (\Cref{lm:node-BFS}), each running in total time $O(m+W)$.
In addition to it, we run two more $w$-BFS on the whole graph when determining $C^*$.
In case of returning the $K_h$-minor model, we additionally run \Cref{lm:minor-existential} in time $O(h\cdot (m+W))$.
\hfill$\square$\null
    \end{namedproof}

\section{Proof of~\Cref{lm:subcubic}}\label{sec:subcubic}

We restate \Cref{lm:subcubic} and then prove it.

\almostembeddinglem*
\begin{proof}
Let $H$ be the graph obtained by subdividing every edge of $K_t$ twice, i.e., replacing the edges of $K_t$ by paths of $3$ edges.
We have that $|V(H)|+|E(H)| = t + 5 \cdot \binom{t}{2} \le 3 t^2$.
We claim that an almost-embedding of $H$ can be turned into a $K_t$-minor model $\{C_1, C_2, \ldots, C_t\}$ in $O(t^2 m)$ time.

Denote the almost-embedding by $\phi$, and the vertices of $H$ corresponding to the vertices of $K_t$ by $v_1, \ldots, v_t$.
First, we set $C'_i = \bigcup_{u \in N(v_i)} V(\phi(v_i u))$, i.e., it is the union of the paths of $\phi$ corresponding to the edges incident to~$v_i$.
The properties of $\phi$ guarantee that (1) each $G[C'_i]$ is connected, and (2) the sets $C'_i$ are pairwise disjoint.

Then, consider a pair $i,j$ with $1 \le i < j \le t$.
Let $v_i, a_{i,j}, b_{i,j}, v_j$ be the path of three edges in $H$ between $v_i$ and~$v_j$.
We have that $\phi(a_{i,j} b_{i,j})$ is a path that intersects both $C'_i$ and $C'_j$.
We let $C'_{i,j}$ be the internal vertices of a~minimal subpath of $\phi(a_{i,j} b_{i,j})$ that intersects both $C'_i$ and $C'_j$.
In particular, $C'_{i,j}$ can be empty if $C'_i$ and $C'_j$ are adjacent.
It holds that (1) each $G[C'_{i,j}]$ is connected or empty, (2) each $C'_{i,j}$ either is adjacent to both $C'_i$ and $C'_j$, or is empty, in which case $C'_i$ and $C'_j$ are adjacent, and (3) the sets $C'_{i,j}$ are pairwise disjoint and disjoint from the sets~$C'_k$ for all $k$.

Now, taking $C_i = C'_i \cup \bigcup_{j > i} C'_{i,j}$ results in $\{C_1, C_2, \ldots, C_t\}$ being a $K_t$-minor model.
This construction can clearly be implemented in $O(t^2 m)$ time.
\end{proof}

\section{Balanced Separators of Size $O(\poly(h) \sqrt n)$ in Time $O(n)$ in Bounded-Degree $K_t$-Minor-Free Graphs}\label{sec:app-linear}

\begin{lemma}\label{lem:connected-partition}
There is an algorithm that, given any connected graph $G$ of maximum degree $\Delta$ and positive integer $p \leqslant |V(G)|$, builds a~partition $\mathcal P = \{P_1, \ldots, P_s\}$ of $V(G)$ such that for every $i \in [s]$
\begin{itemize}
\item $G[P_i]$ is connected, and
\item $p \leq |P_i| \leq p (\Delta+1)$,
\end{itemize}
and runs in time $O(n \Delta)$.
\end{lemma}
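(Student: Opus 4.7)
My strategy is a classical bottom-up clustering on a spanning tree, in the spirit of Frederickson's tree partitioning. First, I would compute a DFS spanning tree $T$ of $G$ rooted at an arbitrary vertex $r$; since $|E(G)| \leq n\Delta/2$, this costs $O(n\Delta)$ time.

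Next, I would process the vertices of $T$ in post-order (each child before its parent), maintaining at every vertex $v$ a cluster $C(v)$, initialized as
\begin{equation*}
C(v) \;=\; \{v\} \,\cup\, \bigcup_{u} C(u),
\end{equation*}
where the union is over those children $u$ of $v$ whose clusters have not yet been ``cut off'' as a part of the output. As soon as $|C(v)| \geq p$, I declare $C(v)$ as a new part $P_i$ and mark it cut so that it is no longer propagated to $v$'s parent. Using linked lists with size counters and representatives, all merges and size tests cost $O(1)$ per tree edge, so the whole scan takes $O(n)$ time on top of the DFS.

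At the root, if $|C(r)| \geq p$ the final cluster becomes its own part; otherwise I attach $C(r)$ to any part $P_j$ that contains an already-cut cluster of a child of $r$. Such a child must exist, since otherwise no cluster was ever cut and $C(r)=V(G)$ would have size at least $p$, contradicting the assumption that we did not cut. Because $r$ is adjacent in $T$ to that child, the enlarged part remains connected.

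The lower bound $|P_i| \geq p$ is immediate from the cutting rule. For the upper bound, note that at the moment we cut $C(v)$, each child cluster merged into it has size strictly less than $p$ (otherwise it would have been cut earlier), so
\begin{equation*}
|C(v)| \;\leq\; 1 + \Delta(p-1) \;\leq\; p\Delta,
\end{equation*}
and the possibly enlarged root-part has size at most $p\Delta + (p-1) < p(\Delta+1)$. Connectivity of each $G[P_i]$ is inherited from the fact that every $C(v)$ is a subtree of $T$, and the root merge attaches along a tree edge. The main (only) obstacle is the root case: it is exactly the slack ``$+1$'' inside $p(\Delta+1)$ that absorbs the at most $p-1$ leftover vertices around $r$, so the bound is tight in spirit and the rest of the argument is purely bookkeeping.
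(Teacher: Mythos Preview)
Your bottom-up, Frederickson-style clustering is a valid and arguably cleaner alternative to the paper's proof, which works top-down instead: the paper grows each part from the current root by walking down the spanning tree in preorder, at each vertex swallowing all ``small'' child subtrees (those of size $<p$), and declaring a part as soon as it reaches size $\ge p$; it then recurses on the remaining subtrees. Both approaches obtain the $[p,\,p(\Delta+1)]$ size range from the same count (one new vertex plus at most $\Delta$ pending subtrees of size $<p$ each), and both run in $O(n\Delta)$. Your approach is the standard tree-partitioning idiom; the paper's top-down variant has the mild advantage of needing no special handling of leftovers at the end.

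That said, your root case contains a slip. You assert that if $|C(r)|<p$ then some \emph{child of $r$} must have been cut, because ``otherwise no cluster was ever cut and $C(r)=V(G)$''. This implication fails: let $r$ have a unique child $c$, and suppose every grandchild's cluster gets cut; then $C(c)=\{c\}$ and $C(r)=\{r,c\}$, so for $p\ge 3$ no child of $r$ is cut even though many clusters were cut lower down. The correct statement is that $C(r)$ is an ancestor-closed subtree of $T$ (if $v\in C(r)$ then no $C(a)$ along the $v$--$r$ path was cut, hence $\mathrm{parent}(v)\in C(r)$ as well), and since $|C(r)|<p\le n$ there exists some $w\notin C(r)$ with $\mathrm{parent}(w)\in C(r)$. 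Attach $C(r)$ to the part containing such a $w$ via the tree edge between $w$ and its parent; your connectivity argument and the size bound $p\Delta+(p-1)<p(\Delta+1)$ then go through unchanged.
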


\begin{proof}
In $O(n \Delta)$ time run a~depth-first search (DFS) on $G$~\cite{Tarjan72}, from an arbitrary vertex~$u$.
Let $T$ be the resulting rooted spanning tree.
In $O(n)$ time, mark each node $v$ of $T$ whenever the subtree of $T$ rooted at $v$ has fewer than~$p$ nodes.
This can be done in a bottom-up way by computing the size of each subtree (and capping at~$p$).
Let $Q$ be the leftmost root-to-leaf branch of~$T$.

We build the first part of $\mathcal P$ as follows.
Initialize $x := u$ and $X := \{u\} \cup Y$, where $Y$ is the union of the vertex sets of the subtrees of~$T$ rooted at a~marked child of~$x$.
While $|X| < p$, set the new $x$ as the first node appearing after (the old) $x$ in the preorder traversal of~$T$ that is not yet in~$X$, add it to~$X$ as well as every subtree of~$T$ rooted at a~marked child of~$x$.
When exiting the while loop, the size of $X$ is between $p$ and $p \Delta$.
We set $P_1 := X$.
It is indeed connected, and $p \leq |P_1| \leq p (\Delta+1)$.

We then proceed inductively in each connected component of~$T-P_1$ (observe that several such components may well be a~single component in~$G-P_1$) and same marking.
Note that each such component has at~least~$p$ vertices, so the while loop of each call terminates.
Computing a~new part $P$ of $\mathcal P$ takes $O(|P|)$ time.
The partition $\mathcal P$ is thus computed in $O(n \Delta)$ total time.
\end{proof}

\begin{theorem}\label{thm:purely-linear} 
Let $G$ be a~given graph with $n$ vertices and maximum degree $\Delta=O(1)$, and $h > 0$ be a parameter. There is an algorithm that runs in deterministic $O(n)$ time and outputs either:
\begin{itemize}
    \item a balanced separator of size $O(h^{26} \sqrt{n})$ or
    \item  $\bot$ if $G$ contains a $K_h$ as a minor. 
\end{itemize}
 Furthermore, in the latter case, we can output a $K_h$-minor model of $G$ with probability at least $1/2$ in an additional randomized $O(n)$ time.
\end{theorem}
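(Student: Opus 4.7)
The plan is to carry out the contraction trick sketched in the conclusion: contract $G$ to a minor $H$ with only $O(n/h^{13})$ vertices, run \Cref{thm:main} on $H$, and lift the outcome back to $G$.

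First I would assume without loss of generality that $G$ is connected (otherwise apply the argument to each component separately and combine). Using \Cref{lem:connected-partition} with $p:=\lceil h^{13}\rceil$, I would partition $V(G)$ in $O(n\Delta)=O(n)$ time into parts $P_1,\ldots,P_s$ with each $G[P_i]$ connected and $p\le|P_i|\le(\Delta+1)p=O(h^{13})$. Contracting each $P_i$ to a supernode $v_i$ (and discarding loops and parallel edges) produces a simple minor $H$ of $G$ with $|V(H)|=s\le n/p=O(n/h^{13})$; this takes $O(n)$ time.

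Next I would run \Cref{thm:main} on $H$; since $|V(H)|=O(n/h^{13})$, its deterministic running time is $O(h^{13}\cdot|V(H)|)=O(n)$. If it returns $\bot$, then $H$ (hence $G$) contains a $K_h$-minor, and I return $\bot$; otherwise it produces a balanced separator $S_H$ of $H$ of size at most $O(h^{13}\sqrt{|V(H)|})\le O(h^{13}\sqrt{n})$. I then set $S:=\bigcup_{v_i\in S_H}P_i$, so $|S|\le|S_H|\cdot(\Delta+1)p=O(h^{13})\cdot O(h^{13}\sqrt{n})=O(h^{26}\sqrt{n})$, computable in $O(n)$ additional time. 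If a $K_h$-minor model of $G$ is required in the former case, I invoke the randomized part of \Cref{thm:main} on $H$ in another $O(n)$ time (succeeding with probability at least $1/2$) and lift each branch set by replacing every supernode $v_i$ by $P_i$, which stays connected and correctly witnesses edges of $H$ because each $G[P_i]$ is connected and every edge of $H$ comes from an edge of $G$ between the corresponding parts.

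The main obstacle is to ensure that $S$ is actually $\frac{2}{3}$-balanced in $G$. A $\frac{2}{3}$-balanced separator of $H$ only guarantees that each component of $G-S$ has at most $\frac{2}{3}|V(H)|\cdot(\Delta+1)p\le\frac{2(\Delta+1)}{3}\,n$ vertices, which exceeds $\frac{2}{3}n$ as soon as $\Delta\ge 1$. I handle this by asking for an $\alpha$-balanced separator of $H$ with $\alpha:=\frac{2}{3(\Delta+1)}$. As noted in \Cref{subsec:idea}, the base iteration behind \Cref{thm:main} already produces a $(1-\frac{1}{200h^2})$-balanced separator, so running it $O(h^2\log(1/\alpha))=O(h^2)$ times (since $\Delta=O(1)$) yields an $\alpha$-balanced separator of $H$ of size still $O(h^{13}\sqrt{|V(H)|})$ in $O(n)$ time. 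Then each component of $G-S$ contains at most $\alpha(\Delta+1)n=\frac{2}{3}n$ vertices, and all running times above remain $O(n)$, completing the plan.
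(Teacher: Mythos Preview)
Your proposal is correct and follows essentially the same contract--separate--lift argument as the paper. The only minor difference is where the $(\Delta+1)$ balance loss is absorbed: you request an $\alpha$-balanced separator on the contracted graph with $\alpha=\tfrac{2}{3(\Delta+1)}$, whereas the paper lifts the ordinary $2/3$-balanced separator to a $(1-\Omega_\Delta(1))$-balanced separator of $G$ and then iterates the entire procedure $O_\Delta(1)$ times.
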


\begin{proof}
Apply \Cref{lem:connected-partition} to $G, p := h^{13}, \Delta$, and get, in $O(n)$ time, a~partition $\mathcal P$ satisfying the lemma. 
In further $O(n)$ time, compute the graph $G'$ obtained by contracting each part of $\mathcal P$ into a~single vertex.
As each part of $\mathcal P$ is connected, $G'$ is a~minor of~$G$.
Note that $G'$ has at~most $n/h^{13}$ vertices.

Now run \Cref{thm:main} on $G'$.
This takes time $O(h^{13} n/h^{13})=O(n)$.
If this yields a~$K_t$-minor model, the partition~$\mathcal P$ lifts it to a~$K_t$-minor model in~$G$.
If, instead, we get a~balanced separator of size $O(h^{13} \sqrt n)$ in $G'$, this gives a~$(1-\frac{1}{2\Delta+3})$-balanced separator of size $O(h^{26} \sqrt n)$ in $G$. 
With a~constant number of iterations (recall $\Delta=O(1)$), we get a~$K_h$-minor model in~$G$ or a~balanced separator of size $O(h^{26} \sqrt n)$.
\end{proof}

\end{document}